
\documentclass[preprint,12pt]{elsarticle}




\usepackage{amssymb}
\usepackage{amsthm}



\usepackage{comment}
\usepackage{graphicx,amssymb,amsmath}
\usepackage{gensymb}
\usepackage{longtable}
\usepackage{array,booktabs,enumitem}
\newcolumntype{P}[1]{>{\endgraf\vspace*{-\baselineskip}}p{#1}}

\makeatletter
\g@addto@macro\bfseries{\boldmath}
\makeatother

\usepackage[explicit]{titlesec}
\titleformat{\paragraph}[runin]{\normalfont\bfseries\itshape}{\theparagraph}{1em}{#1.}

\makeatletter
\def\corref#1{\edef\cnotenum{\elsRef{#1}}%
	\edef\@corref{\ifcase\cnotenum\or
		$\dagger$\or$\dagger\dagger$\fi\hskip-1pt}}

\def\cortext[#1]#2{\g@addto@macro\@cornotes{%
		\refstepcounter{cnote}\elsLabel{#1}%
		\def\thefootnote{\ifcase\thecnote\or$\dagger$\or
			$\dagger\dagger$\fi}%
		\footnotetext{#2}}}
\makeatother

\makeatletter
\def\th@plain{%
	\thm@notefont{}
	\itshape 
}
\def\th@definition{%
	\thm@notefont{}
	\normalfont 
}
\makeatother

\makeatletter
\def\ps@pprintTitle{%
	\let\@oddhead\@empty
	\let\@evenhead\@empty
	\def\@oddfoot{}%
	\let\@evenfoot\@oddfoot}
\makeatother

\graphicspath{{./figs/}}

\journal{JoCG}

\newtheorem{thm}{Theorem}[section]
\newtheorem{lem}[thm]{Lemma}
\newproof{pf}{Proof}
\theoremstyle{definition}
\newtheorem{definition}{Definition}[section]

\begin{document}

\begin{frontmatter}
	
	
	
	\title{Computing Feasible Trajectories for an Articulated Probe in Three Dimensions\tnoteref{label1}}
	\tnotetext[label1]{A preliminary version of this work was presented at the 31st Annual Canadian Conference on Computational Geometry.}
	
	
	\author{Ovidiu Daescu}
	\ead{ovidiu.daescu@utdallas.edu}
	
	\author{Ka Yaw Teo\corref{cor1}}
	\ead{ka.teo@utdallas.edu}
	
	\cortext[cor1]{Corresponding author}
	\address{Department of Computer Science, University of Texas at Dallas, Richardson, TX, USA.}

\begin{abstract}
Consider an input consisting of a set of $n$ disjoint triangular obstacles in $\mathbb{R}^3$ and a target point $t$ in the free space, all enclosed by a large sphere $S$ of radius $R$ centered at $t$.
An articulated probe is modeled as two line segments $ab$ and $bc$ connected at point $b$.
The length of $ab$ can be equal to or greater than $R$, while $bc$ is of a given length $r \leq R$.
The probe is initially located outside $S$, assuming an \emph{unarticulated} configuration, in which $ab$ and $bc$ are collinear and $b \in ac$.
The goal is to find a feasible (obstacle-avoiding) probe trajectory to reach $t$, with the condition that the probe is constrained by the following sequence of moves -- a straight-line insertion of the unarticulated probe into $S$, possibly followed by a rotation of $bc$ at $b$ for at most $\pi/2$ radians, so that $c$ coincides with $t$.

We prove that if there exists a feasible probe trajectory, then a set of \emph{extremal} feasible trajectories must be present.
Through careful case analysis, we show that these extremal trajectories can be represented by $O(n^4)$ combinatorial events.
We present a solution approach that enumerates and verifies these combinatorial events for feasibility in overall $O(n^{4+\epsilon})$ time using $O(n^{4+\epsilon})$ space, for any constant $\epsilon > 0$.
The enumeration algorithm is highly parallel, considering that each combinatorial event can be generated and verified for feasibility independently of the others.
In the process of deriving our solution, we design the first data structure for addressing a special instance of circular sector emptiness queries among polyhedral obstacles in three dimensional space, and provide a simplified data structure for the corresponding emptiness query problem in two dimensions.
\end{abstract}

\end{frontmatter}

\section{Introduction}
In this paper, we address the three-dimensional (3D) version of the \emph{articulated probe trajectory planning problem} introduced by Teo, Daescu, and Fox \cite{teo20traj}.
We are given a 3D workspace containing a set $P$ of $n$ interior disjoint triangular obstacles and a target point $t$ in the free space, all located within a large sphere $S$ of radius $R$ centered at $t$ (see Figure \ref{fig_traj}).
An articulated probe is modeled as two line segments $ab$ and $bc$ joined at point $b$.
Line segment $ab$ has a length of $R$ or greater, while $bc$ has a fixed length $r \in (0, R]$.
Line segment $bc$ may rotate around $b$.

\begin{figure}[h]
	\centering
	\includegraphics[scale=0.24]{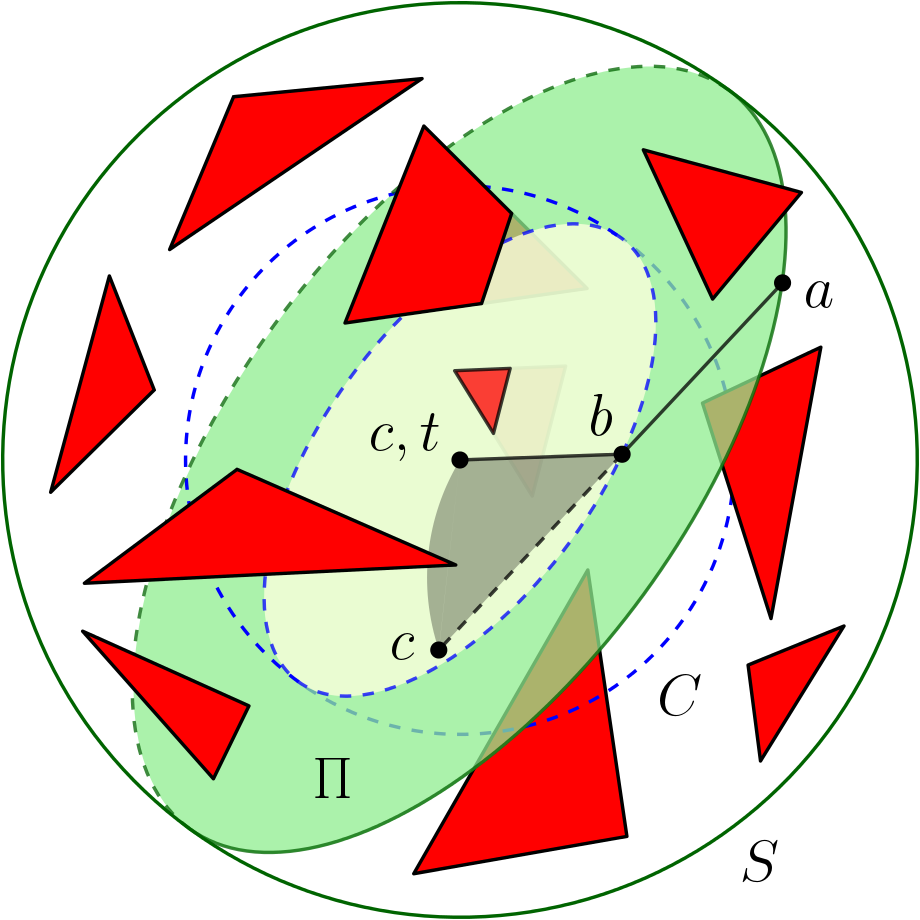}
	\caption{Articulated probe trajectory in 3D.  After inserting line segment $abc$ into sphere $S$, in order to reach target point $t$, line segment $bc$ may be required to rotate from its intermediate position (dashed line segment) to its final position (solid line segment).}
	\label{fig_traj}
\end{figure}

The probe is said to have an \emph{unarticulated} configuration if line segments $ab$ and $bc$ are collinear, and $b \in ac$.
Otherwise, the probe has an \emph{articulated} configuration -- that is, $bc$ has been rotated around $b$, and $bc$ is no longer collinear with $ab$.

The probe begins its trajectory in an unarticulated configuration outside sphere $S$.
The unarticulated probe, represented by straight line segment $abc$, is at first inserted along a straight line into $S$.
After completing the insertion, line segment $bc$ may be rotated around point $b$ up to $\pi/2$ radians in order to reach the target point $t$.
Hence, the final configuration of the probe could be either unarticulated or articulated.
The \emph{intermediate} configuration of the probe is the unarticulated configuration after the insertion and before the rotation.
Without loss of generality, point $a$ is assumed to be located on $S$ after the insertion.

A \emph{feasible} probe trajectory consists of an initial insertion of straight line segment $abc$ into sphere $S$, possibly followed by a rotation of line segment $bc$ around point $b$, such that point $c$ ends at the target point $t$, while avoiding the obstacles in the process of insertion and rotation.
The objective of the problem is to find a feasible probe trajectory, if one exists.

As illustrated in Figure \ref{fig_traj}, a feasible probe trajectory is planar -- that is, its motion always lies in a plane $\Pi$ passing through the target point $t$.
Since line segment $bc$ may only rotate as far as $\pi/2$ radians around point $b$, for any feasible probe trajectory, point $b$ is the first (and only) intersection between line segment $ab$ and the sphere $C$ of radius $r$ centered at the target point $t$.
When line segment $bc$ rotates around point $b$, the area swept by $bc$ is a sector $\sigma$ of a circle $D$ in plane $\Pi$ (i.e., a portion of a disk enclosed by two radii and a circular arc), where $D$ is of radius $r$ and centered at a point on sphere $C$. 
Circular sector $\sigma$ always has an endpoint of one of its bounding radii located at the target point $t$.

\subsection*{Motivation}
Besides its pointed relevance to robotics, the outlined problem arises particularly from planning for minimally invasive surgeries.
In fact, surgical instruments that can be modeled by our simple articulated probe are already in clinical use (e.g., da Vinci EndoWrist by Intuitive Surgical), given their enhanced capability in reaching remote targets while circumventing surrounding critical structures \cite{simaan18medical}.
In our problem setting, a human body cavity can be viewed as (a subset of) workspace $S$, and any critical organ/tissue can be represented by using a triangle mesh.
Despite its importance and relevance, the problem has never been investigated in three dimensions from a theoretical viewpoint, and only a handful of results in two dimensions have been reported \cite{daescu20char,teo20traj}.

\subsection*{Related work}
The motion of a linkage or kinematic chain -- that is, an assembly of rigid links and joints -- has been investigated in many different aspects, including the basic geometric properties and topology of linkages (e.g., reconfiguration and locked decision) \cite{biedl01locked,connelly17geom} as well as application-driven questions related to linkage design and motion planning \cite{choset05principles,lavalle06plan}.

Approximation methods based on sampling \cite{kavraki96probabilistic,lavalle01randomized} and subdivision \cite{brooks85subdivision,donald84motion,zhu90constraint} are commonly used (and widely regarded as efficient) for motion planning involving complex multi-link robots with high degrees-of-freedom.
However, a robotic linkage can often be associated with spatial (link or joint) constraints, resulting in a restricted free configuration space, which has proven to be challenging for inexact sampling-based algorithms to find valid solutions \cite{yakey01randomized}.
In addition, sampling- and subdivision-based approaches are only complete within their inherent limits -- that is, probabilistically and resolution-wise, respectively.

In contrast to previous work on polygonal linkages, which are generally allowed to rotate unrestrictedly at their joints while moving from a start to a final configuration \cite{choset05principles,connelly17geom,lavalle06plan}, our study is concerned with a simple articulated two-bar linkage subjected to a specific sequence of moves -- namely, a straight-line insertion of the linkage followed by a rotation at its joint.
Moreover, one of the links is assumed to have a length $\geq R$ (i.e., radius of spherical workspace $S$).
Without resorting to approximation, we develop an exact geometric-combinatorial approach to solve the proposed path planning problem.

Teo, Daescu, and Fox \cite{teo20traj} originally proposed the aforementioned trajectory planning problem in two dimensions (2D), and they presented an $O(n^2 \log n)$-time, $O(n \log n)$-space algorithm for finding a feasible trajectory amidst $n$ line segment obstacles.
The algorithm was based on computing extremal trajectories that are tangent to one or two obstacle vertices.
In addition, they showed that, for any constant $\delta > 0$, a feasible trajectory of a clearance $\delta$ from the obstacles can be determined in $O(n^2 \log n)$ time using $O(n^2)$ space.
These algorithmic approaches were also extended to the case of $h$ polygonal obstacles, where a feasible trajectory can be found in $O(n^2 + h^2 \log n)$ time using $O(n \log n)$ space when no clearance is required, or $O(n^2)$ space otherwise.

Daescu and Teo \cite{daescu20char} later addressed a variant of the trajectory planning problem where the length $r$ of the end segment $bc$ is not fixed.
They demonstrated that, for any constant $\epsilon > 0$, the shortest length $r > 0$ for which a feasible trajectory exists can be determined in $O(n^{2+\epsilon})$ time using $O(n^{2+\epsilon})$ space, and at least one such trajectory can be reported with the same time/space complexity.
The proposed algorithm can be extended to reporting all feasible values of $r$ in $O(n^{5/2})$ time using $O(n^{2+\epsilon})$ space.
In the same work, Daescu and Teo also showed that, for a given $r$, the feasible solution space for the trajectory planning problem can be characterized by a simple-curve arrangement of complexity $O(k)$, and the arrangement can be constructed in $O(n \log n + k)$ time using $O(n + k)$ space, where $k = O(n^2)$ is the number of vertices of the arrangement.

\subsection*{Results and contributions}
In this paper, we describe an algorithm that computes a feasible probe trajectory in 3D, if one exists, in $O(n^{4+\epsilon})$ time using $O(n^{4+\epsilon})$ space, for any constant $\epsilon > 0$.
First, we prove that if there exists a feasible probe trajectory, then some \emph{extremal} feasible trajectories must be present.
An extremal trajectory is characterized by its intersections or tangencies with a combination of obstacle edges, vertices, and/or surfaces.
Through careful case analysis, we show that these extremal trajectories can be represented by $O(n^4)$ combinatorial events.
Our algorithm is based on enumerating and verifying these combinatorial events for feasibility.
As an alternative, an $O(n^5)$-time algorithm with $O(n)$-space usage is achievable by performing a simple $O(n)$-check on each of the $O(n^4)$ events.

The main new difficulty that arises in this study, when compared to the 2D case in \cite{teo20traj}, is the possibility that for an extremal trajectory in 3D, a segment of the probe may intersect an obstacle edge at an interior point, and an obstacle endpoint may be incident to the interior of the circular sector representing the area swept by the end segment $bc$ of the probe.
As a consequence, in order to address the 3D case, procedures and data structures different from those previously used in \cite{teo20traj} are required, particularly for extremal trajectory computation and circular sector emptiness queries.

While deriving our solution approach, we develop the first data structure for solving a special case of the circular sector emptiness query problem in 3D, where the query circular sector has a fixed radius $r$ and an endpoint of its arc located at fixed point $t$.
We present a data structure of size $O(n^{4+\epsilon})$ for answering a query of the sort in $O(\log n)$ time.
When mapped to the plane, this result yields a new data structure for solving the corresponding circular sector emptiness query problem in 2D.
Our new $\mathbb{R}^2$ query data structure simplifies the two-part approach formerly proposed in \cite{teo20traj} while maintaining the same time and space complexity.
Sharir and Shaul \cite{sharir11semialgebraic} proposed a solution based on semialgebraic range searching for circular cap (i.e., portion of a circle cut off by a line and larger than a semidisk) emptiness queries in 2D.
These circular sector emptiness queries, in 2D and 3D, are considered to be of independent interest.
To the best of the authors' knowledge, there has been no published data structure for general circular sector emptiness queries in 3D or even 2D.

The remainder of the paper is organized as follows.
In Section \ref{extr}, we prove the presence of extremal feasible probe trajectories.
In Section \ref{comp_valid}, we describe an algorithm for computing extremal probe trajectories and verifying them for feasibility.
We conclude in Section \ref{conc} by providing a summary of our results and a few related open questions.

\section{Extremal feasible trajectories}
\label{extr}
In this section, we prove that if there exists a feasible probe trajectory, then a set of extremal feasible trajectories must also be present.
We begin by defining a series of terminologies used in our ensuing discussion.

Let $\ell$ denote a line segment of the probe.
In addition, let $\sigma$ and $\gamma$ denote a circular sector (i.e., area swept by $bc$, as previously described) and its arc, respectively.
Let $\tau$ be a triangular obstacle of $P$ in $\mathbb{R}^3$, and let $e$ denote en edge of $\tau$.
Without loss of generality, assume that $\tau$ is not co-planar with $t$, and $\ell$ is not parallel to $e$.
Line segment $\ell$ may intersect $e$ at an endpoint of $e$, an interior point of $e$, or none.

\begin{definition}[Support vertex of $\ell$]
	If $\ell$ intersects $e$ at an endpoint $p$ of $e$, then $p$ is called a \emph{support vertex} of $\ell$.
\end{definition}

\begin{definition}[Support edge of $\ell$]
	If $\ell$ intersects $e$ at an interior point of $e$, then $e$ is called a \emph{support edge} of $\ell$.
\end{definition}

\noindent
Thus, a \emph{support} of $\ell$ can be either a support vertex or a support edge.
Suppose, without loss of generality, that $\sigma$ and $e$ lie in different planes.

\begin{definition}[Support vertex of $\sigma$]
	If $\sigma$ contains an endpoint $p$ of $e$, then $p$ is a \emph{support vertex} of $\sigma$.
\end{definition}

\begin{definition}[Support edge of $\gamma$]
	If $\gamma$ is tangent to $e$ at an interior point of $e$, then $e$ is a \emph{support edge} of $\gamma$.
\end{definition}

\begin{definition}[Support surface of $\gamma$]
	If $\gamma$ is tangent to the surface of $\tau$, then the surface is a \emph{support surface} of $\gamma$.
\end{definition}

\noindent
For an illustration of the various types of supports just described, see Figure \ref{fig_support}. 

\begin{figure}[h]
	\centering
	\includegraphics[scale=0.19]{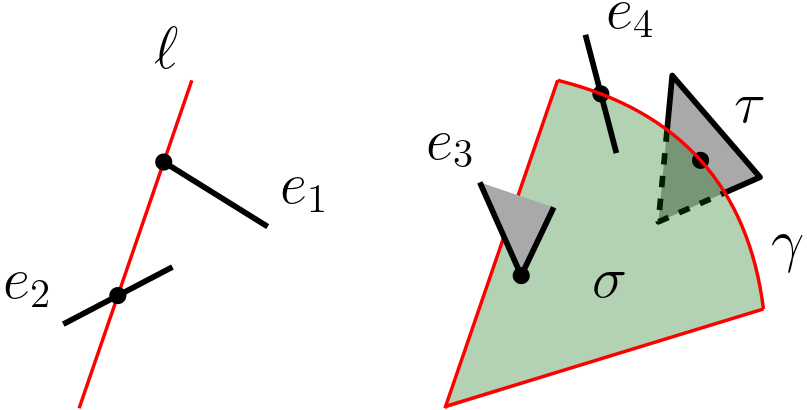}
	\caption{Support vertex, edge, and surface.
		An endpoint of edge $e_1$ is a support vertex of line segment $\ell$.
		Edge $e_2$ is a support edge of $\ell$.
		An endpoint of $e_3$ is a support vertex of circular sector $\sigma$.
		Edge $e_4$ supports circular arc $\gamma$ at an interior point of $e_4$.
		Triangle $\tau$ is a support surface of $\gamma$.}
	\label{fig_support}
\end{figure}

A line segment is \emph{extremal} (or \emph{isolated}) with respect to a set of support vertices and edges if the line segment cannot be moved continuously while maintaining its intersections with these vertices and edges.
Analogously, a probe trajectory is isolated by a set of supports if the trajectory cannot be altered without losing any of its supports.
Note that the term ``isolated" has the same meaning as ``extremal," and they will be used interchangeably hereafter.

\begin{lem}
\label{lem1}
Assume that a feasible \textbf{unarticulated} trajectory exists -- that is, the unarticulated probe can be inserted into $S$ to reach $t$ while avoiding the obstacles (i.e., $t$ can see to infinity without obstruction).
Then, there exists an extremal feasible unarticulated trajectory such that the probe is isolated by either one support vertex or two support edges.
\end{lem}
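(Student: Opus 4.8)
The plan is to start from an arbitrary feasible unarticulated trajectory and apply a continuous-motion / degree-of-freedom argument to push it into an extremal configuration, tracking the supports it acquires. An unarticulated trajectory is a straight-line insertion of the segment $abc$ along a line $L$ through (a point outside) $S$, terminating with $c$ at $t$; since $c$ ends at $t$ and the probe is a segment of fixed combined length $R + r$ (well, length of $ab$ is $\ge R$, $bc$ has length $r$), the trajectory is essentially determined by the direction of $L$, i.e.\ by a point on $S$, so the feasible-direction space is (a subset of) a $2$-sphere and the configuration has two degrees of freedom. The swept region of the probe during a single unarticulated insertion ending at $t$ is a line segment from $t$ going ``outward'' to $S$ (the union of all positions of $abc$ as it slides in is just the final segment together with everything beyond it along $L$, up to $S$); feasibility means this outward ray-segment from $t$ in direction $-\hat L$ misses all obstacles.

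First I would set up the deformation: parametrize feasible insertion directions by the unit vector $u$ (equivalently the point $a(u)\in S$), and note that the set $U$ of feasible $u$ is open in the ``generic'' part (where the corresponding segment meets no obstacle at all) but its closure is compact. Pick any feasible $u_0$. If the segment for $u_0$ already touches an obstacle, we are partway done; otherwise, move $u$ continuously — say along a great-circle arc on $S$ — until the moving segment first makes contact with some obstacle. Because obstacle triangles are closed and the segment varies continuously, such a first contact occurs at a boundary point of the obstacle, hence at an edge (including its endpoints). This contact is either at an interior point of an edge $e$ (giving a support edge) or at an endpoint $p$ of $e$ (giving a support vertex). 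That removes one degree of freedom. Then I would repeat: the remaining one-parameter family of feasible directions maintaining the first support is a curve; push along it until a second contact appears. A second contact again is an edge-interior point (second support edge) or an edge endpoint (support vertex). If we ever hit a support vertex, I claim that alone already isolates the trajectory: a support vertex $p$ forces the insertion line $L$ to pass through the fixed point $p$ and through $t$ (since $c$ must reach $t$ and $abc$ is collinear), and two fixed points determine the line — so one support vertex rigidly fixes the unarticulated trajectory. If instead we only accumulate support edges, then after two support edges the line must meet two fixed lines (the lines through $e_1$ and $e_2$) and pass through $t$; a line through a fixed point meeting two fixed lines is generically isolated (this is the classical ``line transversal to a point and two lines'' rigidity, a $0$-dimensional condition). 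Hence the process terminates with either one support vertex or two support edges, and the final trajectory is feasible by construction (we only ever moved within the feasible set and closed contacts preserve non-crossing).

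The step I expect to be the main obstacle is the careful bookkeeping of \emph{which} contacts actually constrain a degree of freedom, and ruling out degenerate ways the deformation could fail — e.g.\ when the ``first contact'' is tangential and does not reduce the dimension of the feasible family, when moving along the one-parameter family the same support is lost immediately, or when a contact is with an edge parallel to $L$ (excluded by the standing general-position assumption $\ell \not\parallel e$, and $\tau$ not coplanar with $t$). I would handle this by working with the arrangement of ``forbidden'' direction-sets on $S$: each obstacle triangle contributes a closed region of blocked directions bounded by curves corresponding to its edges, the feasible set $U$ is the complement, and an extremal feasible direction is a point of $\partial U$ that is a vertex of this arrangement (meeting of two bounding curves $\Rightarrow$ two support edges) or that lies on a bounding curve which degenerates to forcing collinearity with an obstacle vertex and $t$ ($\Rightarrow$ one support vertex). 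Starting at any feasible point and walking ``downhill'' to a vertex of the arrangement then gives the result; I would also note the boundary case where $U$ touches the set of directions for which $a$ would have to be interior to $S$ — but the WLOG placement of $a$ on $S$ plus $|ab|\ge R$ means this does not create a spurious extremal type. Finally, I would record that the feasibility of the limiting trajectory follows because the blocked-direction regions are closed, so $\partial U \subseteq U$ in the sense that a boundary direction still yields a non-crossing (merely touching) insertion, which we consider feasible.
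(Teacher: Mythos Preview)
Your proposal is correct and follows essentially the same approach as the paper: start from a feasible unarticulated trajectory, rotate it about $t$ until it first contacts an obstacle edge (gaining a support), and then---if that support is an edge $e_1$ rather than a vertex---continue within the one-parameter family that maintains contact with $e_1$ until a second support appears. The paper makes that second step concrete by rotating in the unique plane $\Psi$ containing $e_1$ and $t$, which is precisely the locus of lines through $t$ meeting $e_1$; your arrangement-on-$S$ viewpoint is an equivalent global reformulation of the same reduction.
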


\begin{proof}
\label{lem1_proof}
Let $T$ be a feasible probe trajectory such that its final configuration is unarticulated.
Specifically, in the final configuration of the probe, $abc$ is a straight line segment, and point $c$ coincides with $t$.

Let $\Pi$ be an arbitrary plane passing through $t$ and containing line segment $abc$.
Assume that plane $\Pi$ intersects at least one triangular obstacle.
Observe that such an arbitrary plane $\Pi$ always exists; otherwise, workspace $S$ must be free of obstacles.
Let $T'$ be the trajectory resulting from rotating $T$ around $t$ in plane $\Pi$ until $T$ intersects an obstacle edge at either i) an endpoint or ii) an interior point.

If the intersection occurs at an endpoint $v$ of the obstacle edge, then we have a feasible trajectory $T'$ that is isolated by a support vertex $v$ (for $t$ and $v$ define a unique line).

On the other hand, suppose that $T'$ intersects the obstacle edge at an interior point.
Let $e_1$ denote the intersecting obstacle edge, and $\Psi$ be the unique plane containing $e_1$ and $t$.
When $T'$ is rotated around $t$ in plane $\Psi$ until $T'$ intersects an endpoint of $e_1$ or some obstacle edge $e_2$, a new feasible trajectory $T''$ is obtained.
If $T''$ intersects $e_1$ or $e_2$ at an endpoint, then $T''$ is isolated with respect to a support vertex.
If $T''$ intersects $e_2$ at an interior point, then $T''$ is isolated with respect to two support edges (i.e., $e_1$ and $e_2$).
\end{proof}

\begin{lem}
\label{lem2}
Assume that a feasible \textbf{articulated} trajectory exists -- that is, the unarticulated probe can be inserted into $S$, and a subsequent rotation of $bc$ around $b$ can be performed to reach $t$, all while avoiding the obstacles.
Then, there exists either
\renewcommand{\labelenumi}{\roman{enumi})}
\begin{enumerate}
	\item an extremal feasible unarticulated trajectory or 
	\item an extremal feasible trajectory such that the probe assumes an articulated final configuration, and the trajectory is isolated with respect to at most four supports (edges, vertices, surfaces, or a combination of the three).
\end{enumerate}
\end{lem}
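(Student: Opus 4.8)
The plan is to start from the hypothesized feasible articulated trajectory $T$ and to deform it continuously inside the space of feasible trajectories, removing one degree of freedom at a time by forcing a new contact with the obstacle set, until the trajectory can no longer be deformed at all. A feasible trajectory has four degrees of freedom: for instance, two for the position of $b$ on the sphere $C$, and two for the insertion direction at $b$ (the latter constrained so that the angle $\theta$ by which $bc$ must be rotated to bring $c$ to $t$ is at most $\pi/2$); everything else is then determined, since the motion plane $\Pi$ is spanned by the insertion direction and $t-b$, the intermediate segment lies along the insertion direction, the final segment lies along $t-b$, and $a$ is the point where the backward insertion ray from $b$ leaves $S$. (Equivalently, once $\Pi$ is chosen --- two degrees of freedom, as $\Pi$ must pass through $t$ --- the situation inside $\Pi$ is the planar one treated in \cite{teo20traj}.) Each contact of one of the kinds defined before the lemma --- a support vertex or support edge of a probe segment, a support vertex of $\sigma$, or a support edge or support surface of $\gamma$ --- cuts the dimension of the space of admissible deformations by at least one, so after at most four such steps the trajectory is isolated by at most four supports, which is conclusion (ii). If instead the deformation ever drives $\theta$ to $0$, the trajectory becomes unarticulated, and Lemma~\ref{lem1} then supplies an extremal feasible unarticulated trajectory, which is conclusion (i).

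For the deformation schedule I would begin, in the spirit of the proof of Lemma~\ref{lem1}, with an explicit move: keep $b$ and $\Pi$ fixed and decrease $\theta$. This shrinks the sector $\sigma$ monotonically, and since the interior of $\sigma$ is obstacle-free for a feasible trajectory (it is swept by $bc$ during the rotation), neither the rotating radius $bc$ nor the arc $\gamma$ can create a new contact while $\theta$ decreases; the decrease can only be obstructed by the segment $ab$, which swings about $b$, so the first new contact is a support vertex or a support edge of $ab$ (unless $\theta$ reaches $0$ first, giving conclusion (i)). Retaining this support, I would keep spending the remaining freedom so as to force further contacts; now that $b$ and $\Pi$ may also move, $\sigma$ no longer merely shrinks, so a later contact may also be a support vertex of $\sigma$ (an obstacle vertex swept over by $bc$) or a support edge or support surface of $\gamma$. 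The descent strictly lowers the dimension of the admissible set at each step and terminates, when that dimension reaches zero, at an isolated feasible articulated trajectory carrying at most four supports, each of which contributed at least one to the codimension. Note that since $\theta$ is only ever decreased, its upper bound $\pi/2$ is never reached and plays no role in isolating the trajectory.

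The hard part is precisely this case analysis and its bookkeeping. Unlike the planar setting of Lemma~\ref{lem1}, in $\mathbb{R}^3$ a probe segment can cross the relative interior of an obstacle edge, and an obstacle vertex can lie in the relative interior of the two-dimensional sector $\sigma$ (the probe meeting it through the interior of $\sigma$ rather than through the bounding arc $\gamma$), so the set of contacts that a one-parameter feasible deformation can first create is larger than in 2D; one must show that it is exhausted by the five support types above and that each such contact removes at least one degree of freedom. One must also check that the side conditions built into the parametrization --- that $b$ is the first crossing of the insertion line with $C$, that the rotation bringing $c$ to $t$ stays within $\pi/2$, and that $a$ lies on $S$ --- persist along the deformation, or, where a side condition becomes tight, reduces either to the boundary $\theta=\pi/2$ (which, being only decreased, is never reached) or to a degeneracy that can be perturbed away. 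Granting this bookkeeping, the deformation either reaches an unarticulated configuration, in which case Lemma~\ref{lem1} yields (i), or halts isolated by at most four supports, yielding (ii).
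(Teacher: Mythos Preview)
Your proposal is correct and follows essentially the same approach as the paper: perturb the trajectory by first rotating $ab$ about $b$ in $\Pi$ (i.e., decreasing $\theta$) until either the unarticulated case is reached or $ab$ picks up a support, then keep spending the remaining degrees of freedom on further supports until the trajectory is isolated by at most four of them. The paper's proof is precisely the explicit case analysis that you defer as ``bookkeeping,'' carried out by always moving in the direction that increases the obtuse angle $\angle abt$ (equivalently, decreases your $\theta$) so that the boundary $\theta=\pi/2$ is never reached.
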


\begin{proof}
\label{lem2_proof}
Let $T$ denote a feasible probe trajectory such that its final configuration is articulated.
Namely, in the final configuration of the probe, $ab$ and $bc$ are not collinear, and point $c$ coincides with $t$.

Let $\Pi$ be the unique plane containing $ab$ and $t$.
Since the trajectory of the probe is planar (i.e., in plane $\Pi$), without loss of generality, assume that $bc$ of $T$ is rotated clockwise around $b$ to reach $t$ in plane $\Pi$.
For ease of subsequent discussion, $bc$ and $bt$ (of an articulated trajectory) would be used to indicate line segment $bc$ of the probe in its intermediate and final configurations, respectively.

A feasible trajectory $T'$ can be obtained by rotating $ab$ of $T$ around $b$ in clockwise direction in plane $\Pi$ until either $ab$ and $bt$ become collinear or $ab$ intersects an obstacle edge $e_1$ outside $C$.
In the former case, $T'$ has an unarticulated final configuration, and by directly applying Lemma \ref{lem1}, we obtain an extremal feasible unarticulated trajectory as a result.
In the latter case, let $T''$ be the trajectory resulting from rotating $bt$ of $T'$ around $t$ counter-clockwise in plane $\Pi$, while keeping $ab$ intersecting $e_1$, until either 1) $bt$ or 2) $ab$ intersects some obstacle edge $e_2$.
Trajectory $T''$ is feasible, and the proof is given in \cite{teo20traj}.

\paragraph{Case 1: $bt$ intersects $e_2$ (inside $C$)}
Let $p_1$ be the intersection point between $ab$ and $e_1$.
Let ${e_2}'$ be the projection of $e_2$ onto the surface of sphere $C$ from center $t$.
Notice that ${e_2}'$ is a circular arc on $C$, and point $b$ of trajectory $T''$ must lie on ${e_2}'$ as long as $bt$ intersects $e_2$ (see Figure \ref{fig_c1}).

\begin{figure}[h]
	\centering
	\includegraphics[scale=0.24]{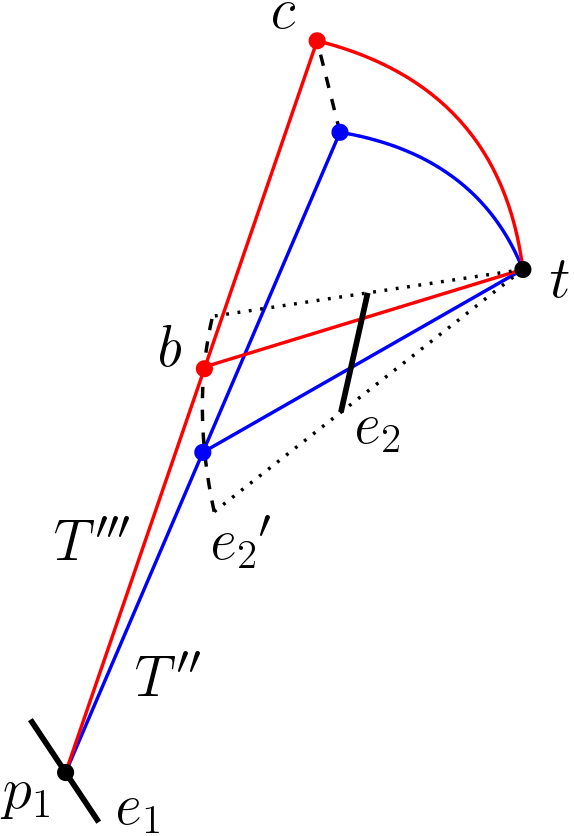}
	\caption{Illustration for Case 1.}
	\label{fig_c1}
\end{figure}

Let $\sigma_{bct}$ denote the minor circular sector (with a central angle $\leq \pi/2$ radians) bounded by radii $bc$ and $bt$, and let $\gamma_{ct}$ be the circular arc of $\sigma_{bct}$.
Let point $b$ of $T''$ be translated along ${e_2}'$ in the direction such that the obtuse angle $\angle abt$ increases, while keeping $ab$ intersecting $e_1$ at $p_1$, until
a) $bt$, b) $bc$, c) $ab$, d) $\sigma_{bct}$, or e) $\gamma_{ct}$ intersects with an obstacle edge $e_3$,
f) $\gamma_{ct}$ is tangent to the surface of some obstacle triangle,
g) $bt$ intersects with an endpoint of $e_2$, or
h) $ab$ becomes collinear with $bt$.
Let $T'''$ denote the resulting trajectory, as depicted in Figure \ref{fig_c1}.
Since $T'''$ is obtained by simply perturbing $T''$ until the trajectory comes into contact with an obstacle (or $\angle abt$ reaches $\pi$ radians), $T'''$ is feasible.

\paragraph{Case 1(a): $bt$ intersects $e_3$}
Note that two generic line segments and a point induce an incident line in 3D space.
Since $t$ is fixed and $bt$ intersects two obstacle edges $e_2$ and $e_3$, $bt$ is isolated (and thus point $b$ is fixed).

Let $\Psi$ be the unique plane containing $e_1$ and $b$.
As illustrated in Figure \ref{fig_c1a-d} Case 1(a),  let $T''''$ be the trajectory obtained from rotating $abc$ of $T'''$ around $b$ in plane $\Psi$ in the direction such that the obtuse angle $\angle abt$ increases until
i) $bc$, ii) $ab$, iii) $\sigma_{bct}$, or iv) $\gamma_{ct}$ intersects with some obstacle edge $e_4$,
v) $\gamma_{ct}$ is tangent to the surface of an obstacle triangle,
vi) $ab$ intersects an endpoint of $e_1$, or 
vii) $ab$ becomes collinear with $bt$.
Obviously, given that $T''''$ is derived from $T'''$ through a straightforward perturbation, $T''''$ is feasible.

\begin{figure}[h]
	\centering
	\includegraphics[scale=0.24]{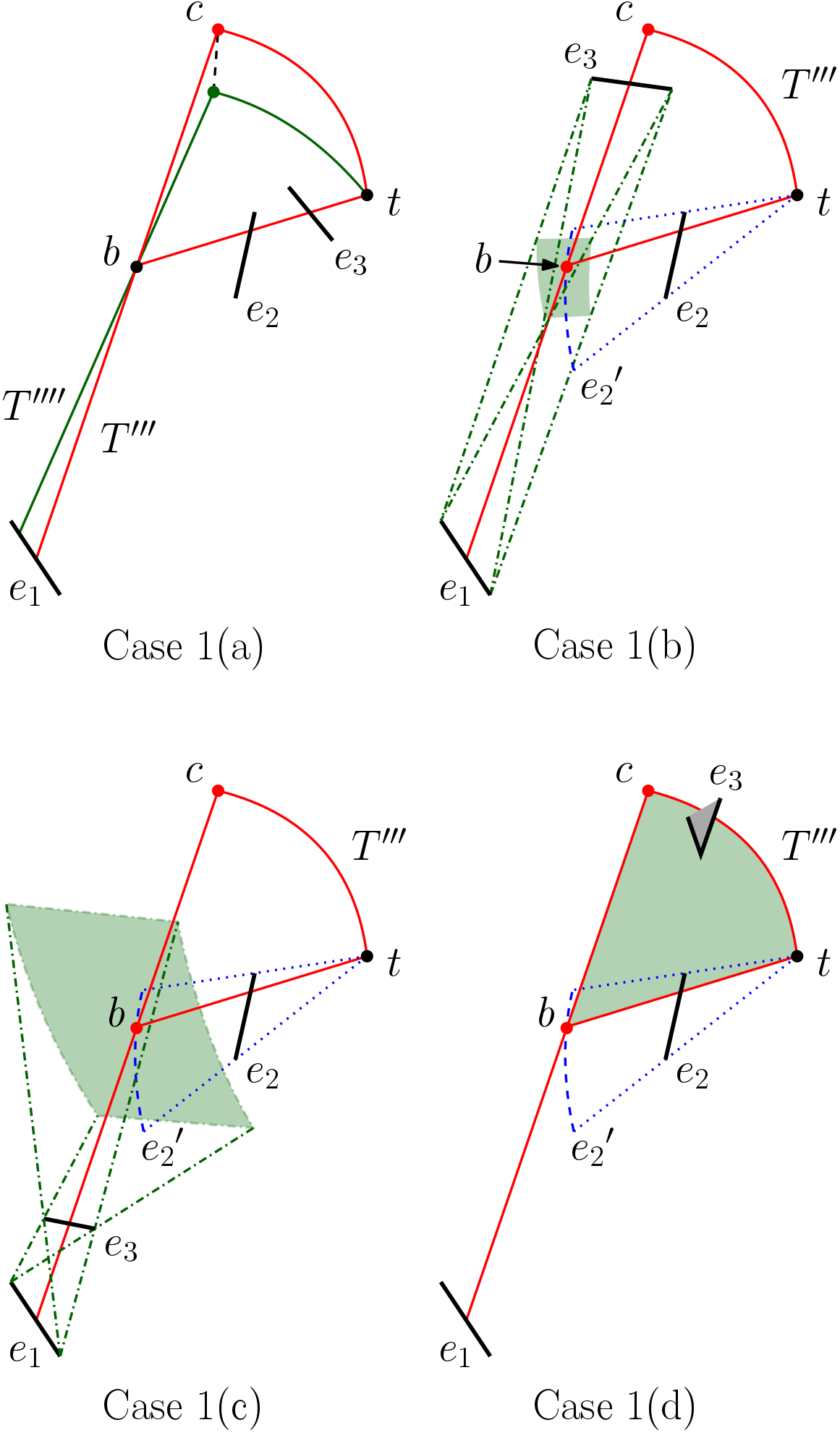}
	\caption{Illustrations for Cases 1(a)--(d).}
	\label{fig_c1a-d}
\end{figure}

\paragraph{Case 1(a)(i), (ii): $bc$ or $ab$ intersects $e_4$}
Since $abc$ intersects two obstacle edges $e_1$ and $e_4$ (and $b$ is fixed), $abc$ is isolated.
Thus, $T''''$ is an extremal feasible articulated trajectory.

\paragraph{Case 1(a)(iii): $\sigma_{bct}$ intersects $e_4$ (at one of its endpoints)}
In 3D space, the intersection of two generic planes is a line.
Since the plane containing $\sigma_{bct}$ and the plane $\Psi$ containing $e_1$ and $b$ are fixed, their intersection line, which supports $abc$, is fixed.
Hence, $T''''$ is an isolated feasible articulated trajectory.

\paragraph{Case 1(a)(iv), (v): $\gamma_{ct}$ intersects $e_4$ or is tangent to the surface of an obstacle triangle}
The plane containing $\sigma_{bct}$ is isolated, and so is $abc$ (for more insight, see Cases 1(e) and (f) below).
As a result, $T''''$ is an isolated feasible articulated trajectory.

\paragraph{Case 1(a)(vi): $ab$ intersects an endpoint of $e_1$}
Since $b$ is fixed, $abc$ is isolated with respect to $b$ and the intersected endpoint of $e_1$.
Thus, $T''''$ is an extremal feasible articulated trajectory.

\paragraph{Case 1(a)(vii): $ab$ becomes collinear with $bt$}
In this case, by directly applying Lemma \ref{lem1}, we claim that $T''''$ is an extremal feasible unarticulated trajectory.

\paragraph{Case 1(b): $bc$ intersects $e_3$}
Note that $ab$ intersects $e_1$, and $bc$ intersects $e_3$.
Let $P(e_1, e_3)$ be the set of points $q$ for which there exists a line segment starting at $e_1$, passing through $q$, and ending at $e_3$.
Let $I$ be the intersection of $P(e_1, e_3)$, $C$, and ${e_2}'$ (for an illustration, see Figure \ref{fig_c1a-d} Case 1(b), where $I$ is indicated as the portion of the blue dashed line lying in the green shaded area).
Notice that $I$ is a circular arc on $C$.
An isolated feasible trajectory can be obtained by translating point $b$ of $T'''$ along $I$ in the direction such that the obtuse angle $\angle abt$ increases until
i) $bt$, $bc$, $ab$, $\sigma_{bct}$, or $\gamma_{ct}$ intersects with some obstacle edge $e_4$,
ii) $\gamma_{ct}$ is tangent to the surface of an obstacle triangle,
iii) $bt$, $bc$, or $ab$ reaches an endpoint of its currently intersecting obstacle edge, or
iv) $ab$ becomes collinear with $bt$.
The detailed analyses for Cases 1(b)(i)--(iv) (as well as similar others in the cases that follow) are omitted herein due to their similarity to those in Case 1(a).

\paragraph{Case 1(c): $ab$ intersects $e_3$}
Notice that $ab$ intersects two edges $e_1$ and $e_3$.
Assume, without loss of generality, that $e_1$ and $e_3$ are intersected by $ab$ in the way depicted in Figure \ref{fig_c1a-d} Case 1(c).
Let $P(e_1, e_3)$ be the set of points $q$ for which there exists a ray starting at $e_1$, passing through $e_3$, and then through $q$.
Let $I$ be the intersection of $P(e_1, e_3)$, $C$, and ${e_2}'$.
Note that $I$ is a circular arc on $C$.
Circular arc $I$ is illustrated, in Figure \ref{fig_c1a-d} Case 1(c), as the portion of the blue dashed line lying in the green shaded area.
An isolated feasible trajectory can be obtained by moving point $b$ of $T'''$ along $I$ in the direction that increases the obtuse angle $\angle abt$ until
i) $bt$, $bc$, $ab$, $\sigma_{bct}$, or $\gamma_{ct}$ intersects with an obstacle edge $e_4$,
ii) $\gamma_{ct}$ is tangent to the surface of an obstacle triangle,
iii) $bt$ or $ab$ reaches an endpoint of its currently intersecting obstacle edge, or
iv) $ab$ becomes collinear with $bt$.

\paragraph{Case 1(d): $\sigma_{bct}$ intersects $e_3$ (at one of its endpoints)}
Observe, as shown in Figure \ref{fig_c1a-d} Case 1(d), that $T'''$ can be made isolated if point $b$ of $T'''$ is further translated along ${e_2}'$, while keeping $ab$ intersecting $e_1$ and maintaining the incidence between $\sigma_{bct}$ and $e_3$, until
i) $bt$, $bc$, $ab$, $\sigma_{bct}$, or $\gamma_{ct}$ intersects with some obstacle edge $e_4$,
ii) $\gamma_{ct}$ is tangent to the surface of an obstacle triangle,
iii) $bt$ or $ab$ reaches an endpoint of its currently intersecting obstacle edge, or
iv) $ab$ becomes collinear with $bt$.
The resulting trajectory is feasible.

\paragraph{Case 1(e): $\gamma_{ct}$ intersects $e_3$}
Let $D$ be a sphere of radius $r$ with center $b$ located on sphere $C$.
Note that $D$ passes through $t$.
Consider the scenario that $D$ is tangent to obstacle edge $e_3$.
When $D$ is rotated around $t$ while maintaining its tangency to $e_3$, the center $b$ of $D$ translates along a curve ${e_3}'$ on $C$ (see Figure \ref{fig_c1e}).

\begin{figure}[h]
	\centering
	\includegraphics[scale=0.24]{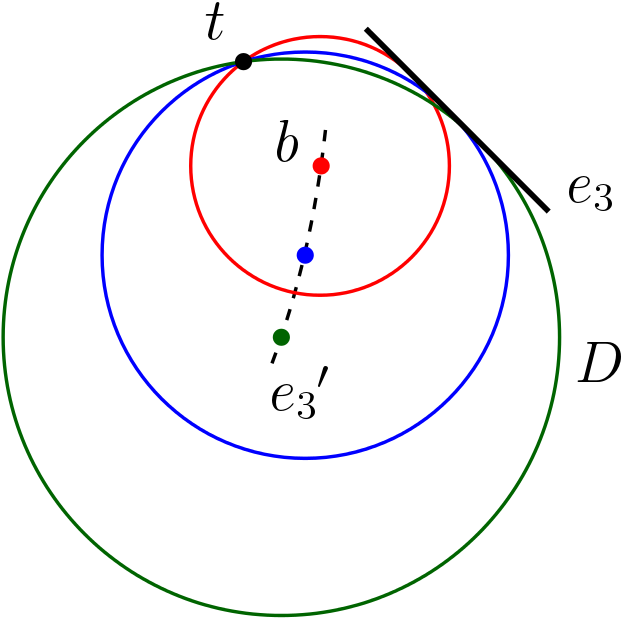}
	\caption{Illustration for Case 1(e).
		A top sectional view along the plane containing $e_3$ and $t$.}
	\label{fig_c1e}
\end{figure}

A plane can be defined by using two points on $D$ and the center $b$ of $D$.
Consider a sphere $D$ with center $b$ on ${e_3}'$.
Sphere $D$ intersects $t$ and is tangent to $e_3$ at a point $p$.
Thus, we can define a plane $\Pi_b$ containing $t$, $p$, and $b$, for any given point $b$ on ${e_3}'$.

Notice that $\gamma_{ct}$ is a circular arc on $D$ originating at $t$, and $\gamma_{ct}$ must be on the circle of intersection between $D$ and $\Pi_b$, for some point $b$ on ${e_3}'$.
Since ${e_2}'$ and ${e_3}'$ intersect at a point (i.e., point $b$ on $C$), point $b$ is fixed.
As a result, plane $\Pi_b$ is set, and so is the circle on which $\gamma_{ct}$ lies.
If $\Pi_b$ does not contain $e_1$, then $abc$ is isolated.
Otherwise, $abc$ can be rotated around $b$, while continuing to intersect $e_1$, until either $ab$ or $bc$ intersects an obstacle edge (or an endpoint of $e_1$).
In the end of either scenario, an extremal feasible trajectory is obtained.

\paragraph{Case 1(f): $\gamma_{ct}$ is tangent to the surface of an obstacle triangle}
The analysis is similar to Case 1(e).
Let $\tau$ denote the obstacle triangle to whose surface $\gamma_{ct}$ is tangent.
As $D$ is rotated around $t$ while remaining tangent to the surface of $\tau$, the center $b$ of $D$ moves along a curve $\kappa_\tau$ on $C$.
Given that ${e_2}'$ and $\kappa_\tau$ intersect at a point, $b$ is fixed.
The same argument as in Case 1(e) immediately follows.

\paragraph{Case 1(g): $bt$ intersects an endpoint of $e_2$}
Given that $t$ is fixed, $bt$ of $T'''$ is isolated by its intersection with an endpoint of $e_2$.
A similar analysis as in Case 1(a) then follows.

\paragraph{Case 1(h): $bc$ becomes collinear with $bt$}
In this case, Lemma \ref{lem1} is directly applicable, resulting in an extremal feasible unarticulated trajectory.

\paragraph{Case 2: $ab$ intersects $e_2$ (outside $C$)}
Using similar arguments as given in Case 1, we can obtain an isolated feasible articulated trajectory such that i) $ab$ intersects two edges, and ii) any of $bt$, $bc$, $ab$, $\sigma_{bct}$, and $\gamma_{ct}$ are supported by at most two vertices, edges, and/or surfaces. \\

One may find some overlaps between Cases 1 and 2.
On the whole, an extremal feasible articulated trajectory is characterized by its intersections (or tangencies) with at most four supports consisting of obstacle endpoints, edges, surfaces, or a combination of the three.
This concludes the proof of Lemma \ref{lem2}.
\end{proof}

\section{Finding and validating extremal trajectories}
\label{comp_valid}
Based on Lemmas \ref{lem1} and \ref{lem2}, we can compute the set of extremal probe trajectories and verify them for feasibility.
The algorithms and data structures required are discussed in this section.

\subsection{Extremal feasible unarticulated trajectories}
Let $V$ and $E$ denote the set of vertices and edges of the triangles of $P$, respectively.
We compute the set $R$ of rays, each of which i) originates at point $t$, ii) is either passing through a vertex of $V$ or isolated with respect to two support edges of $E$, and iii) does not intersect any triangle of $P$ in its interior.
Each ray of $R$ represents an extremal feasible unarticulated trajectory.
Set $R$ can be obtained by computing the visibility polyhedron from point $t$ in $O(n^2)$ time using $O(n^2)$ space \cite{mckenna87worst}.
In fact, this approach yields all feasible unarticulated solutions, since the visibility polyhedron gives all unbounded rays from $t$ to infinity (i.e., feasible unarticulated trajectories).

\begin{lem}
\label{lem3}
The set of all feasible unarticulated probe trajectories can be found in $O(n^2)$ time using $O(n^2)$ space.
\end{lem}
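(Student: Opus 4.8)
The plan is to reduce the enumeration of all feasible unarticulated trajectories to the construction of the visibility region of the target point $t$ among the triangular obstacles of $P$. Recall that an unarticulated trajectory is a straight line segment $abc$ that is inserted radially into $S$ toward $t$; such a trajectory is feasible if and only if the open segment from $t$ out to the sphere $S$ along the corresponding direction is disjoint from every triangle of $P$. Equivalently, the feasible unarticulated trajectories are in bijection with the set of directions in which $t$ sees to infinity (since $ab$ has length at least $R$, once a ray from $t$ clears all obstacles inside $S$ it clears them everywhere). Thus it suffices to compute the set of maximal obstacle-free rays emanating from $t$.

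First I would invoke the worst-case-optimal construction of the visibility polyhedron from a point among $n$ triangles in $\mathbb{R}^3$ due to McKenna and O'Rourke~\cite{mckenna87worst}, which runs in $O(n^2)$ time and uses $O(n^2)$ space; here $O(n^2)$ is also a tight bound on the combinatorial complexity of that region in the worst case. The boundary of this visibility polyhedron, together with the unbounded faces (the directions escaping to infinity), directly encodes every feasible unarticulated trajectory: an unbounded face corresponds to a two-parameter family of escaping rays, an edge of such a face to a one-parameter family, and a vertex to an isolated ray. In particular, the isolated rays arise exactly as the rays from $t$ that either pass through a vertex of $V$ (a support vertex) or are pinned by two support edges of $E$ — matching the description of the set $R$ in the paragraph preceding the lemma — and these are read off as the vertices of the arrangement on the boundary of the visibility polyhedron.

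Next I would argue completeness and correctness: because the visibility polyhedron captures \emph{all} points (hence all unbounded rays) visible from $t$, no feasible unarticulated trajectory is missed, and conversely every ray it reports is by definition obstacle-free, so every reported trajectory is feasible. Finally, reading off and storing the faces, edges, and vertices of this region, and in particular extracting the finite set $R$ of extremal (isolated) rays, takes time and space proportional to its size, i.e. $O(n^2)$. This establishes the claimed $O(n^2)$ time and $O(n^2)$ space bounds.

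The main obstacle here is essentially bookkeeping rather than a deep geometric difficulty: one must be careful that ``feasible unarticulated trajectory'' as defined for the probe (a radial insertion from outside $S$) coincides exactly with ``ray from $t$ escaping to infinity,'' which uses the hypothesis $|ab|\ge R$ together with the fact that all obstacles lie inside $S$; and one must confirm that the visibility-polyhedron construction of~\cite{mckenna87worst} indeed reports the unbounded component and its boundary arrangement (the directions to infinity), not merely the bounded star-shaped region, so that both the continuum of feasible directions and the finite witness set $R$ are available within the stated bounds.
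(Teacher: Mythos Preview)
Your proposal is correct and follows essentially the same approach as the paper: reduce feasible unarticulated trajectories to rays from $t$ escaping to infinity, and compute the visibility polyhedron from $t$ in $O(n^2)$ time and space via \cite{mckenna87worst}. The paper's justification is terser but identical in substance; the only quibble is attribution (the cited result is due to McKenna alone, not McKenna and O'Rourke).
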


\subsection{Extremal feasible articulated trajectories}
We begin by computing the set of extremal articulated trajectories, which are characterized by $O(n^4)$ combinatorial events (see Lemma \ref{lem2}).
As detailed next, these extremal articulated trajectories can be enumerated in $O(n^4)$ time either geometrically or algebraically.

\subsubsection{Computing extremal articulated trajectories}
\label{comp}
Table \ref{tab1} lists all the distinct cases of isolated articulated trajectories, each of which is indicated by the number of obstacle edges (i.e., support edges) intersected by line segments $ab$, $bc$, $bt$, and circular arc $\gamma_{ct}$, and the number of obstacle endpoints (i.e., support vertices) incident on the interior of circular sector $\sigma_{bct}$.
Table \ref{tab1} also provides, for each case, a brief description of the method (i.e., a sequence of geometric operations) and the worst-case time complexity for finding an extremal articulated trajectory.
When describing the geometric operations in each case, obstacles edges are divided into those lying inside and outside $C$.
Given that an obstacle edge may intersect $C$ at most twice, an obstacle edge may be partitioned into at most three line segments.

\small
\setlength{\tabcolsep}{5pt}
\renewcommand{\arraystretch}{1.5}
\begin{longtable}[c]{|c|c|c|c|c|c|P{6cm}|c|}
	\caption{Extremal articulated trajectories.}
	\label{tab1} \\
	
	\hline
	\textbf{Case} & \boldmath$ab$ & \boldmath$bc$ & \boldmath$bt$ & \boldmath$\sigma_{bct}$ & \boldmath$\gamma_{ct}$ & \textbf{Method} & \textbf{Time} \\
	\hline
	\endfirsthead
	
	\multicolumn{8}{|c|}{Continued on next page.} \\
	\hline
	\endfoot
	
	\hline
	\textbf{Case} & \boldmath$ab$ & \boldmath$bc$ & \boldmath$bt$ & \boldmath$\sigma_{bct}$ & \boldmath$\gamma_{ct}$ & \textbf{Method} & \textbf{Time} \\
	\hline
	\endhead
	
	\endlastfoot
	
	1* & 1 & 1 & 1 & 1 &   & See main text in Section \ref{comp}. & $n^4$ \\
	\hline
	
	2 & 1 & 2 & 1 &   &   &
	\begin{itemize}[label={--},noitemsep,leftmargin=*,topsep=0pt,partopsep=0pt]
		\item For any three obstacle edges $e_1$, $e_2$, and $e_3$ (one outside $C$, and two inside or outside $C$), compute the projection (curve $\kappa_1$) of the hyperboloid that contains $e_1$, $e_2$, and $e_3$ on $C$.
		\item For every obstacle edge $e_4$ inside $C$, compute its projection (curve $\kappa_2$) on $C$.
		\item Compute the intersection of $\kappa_1$ and $\kappa_2$.
		\vspace*{-\baselineskip}
	\end{itemize}
	& $n^4$ \\
	\hline
	
	3 & 1 & 1 & 2 &   &   &
	\begin{itemize}[label={--},noitemsep,leftmargin=*,topsep=0pt,partopsep=0pt]
		\item For any two obstacle edges $e_1$ and $e_2$ (one inside $C$, and the other inside or outside $C$), compute the projection (area $A$) of $P(e_1, e_2)$ on $C$.
		\item For any two obstacle edges $e_3$ and $e_4$ inside $C$, compute the radial line segment $s$ originating at $t$ and intersecting $e_3$ and $e_4$, and determine the intersection point $b$ between $s$ and $C$.
		\item Note that a distinct $abc$ is given by a point $b$ lying on $A$.
		\vspace*{-\baselineskip}
	\end{itemize}
	& $n^4$ \\
	\hline
	
	4 & 1 &   & 1 & 2 &   &
	\begin{itemize}[label={--},noitemsep,leftmargin=*,topsep=0pt,partopsep=0pt]
		\item For any two obstacle endpoints $v_1$ and $v_2$ (within distance $2r$ from $t$), compute the plane $\Pi$ that contains $v_1$, $v_2$, and $t$.
		\item For any obstacle edge $e_1$ outside $C$, compute the intersection of $e_1$ and $\Pi$.
		\item For any obstacle edge $e_2$ inside $C$, compute the intersection of $e_2$ and $\Pi$.
		\vspace*{-\baselineskip}
	\end{itemize}
	& $n^4$ \\
	\hline
	
	5 & 1 &   & 2 & 1 &   &
	\begin{itemize}[label={--},noitemsep,leftmargin=*,topsep=0pt,partopsep=0pt]
		\item For any two obstacle edges $e_1$ and $e_2$ inside $C$, compute the radial line segment $s$ originating at $t$ and intersecting $e_1$ and $e_2$, and determine the intersection point $b$ between $s$ and $C$.
		\item For any obstacle endpoint $v$ (within distance $2r$ from $t$), compute the plane $\Pi$ that contains $v$, $b$, and $t$.
		\item For any obstacle edge $e_3$ outside $C$, compute the intersection of $e_3$ and $\Pi$.
		\vspace*{-\baselineskip}
	\end{itemize}
	& $n^4$ \\
	\hline
	
	6* & 2 &   & 1 & 1 &   & See main text in Section \ref{comp}. & $n^4$ \\
	\hline
	
	7* & 2 & 1 &   & 1 &   & See main text in Section \ref{comp}. & $n^4$ \\
	\hline
	
	8 & 2 & 1 & 1 &   &   &
	\begin{itemize}[label={--},noitemsep,leftmargin=*,topsep=0pt,partopsep=0pt]
		\item For any three obstacle edges $e_1$, $e_2$, and $e_3$ (two outside $C$, and one inside or outside $C$), compute the projection (curve $\kappa_1$) of the hyperboloid that contains $e_1$, $e_2$, and $e_3$ on $C$.
		\item For any obstacle edge $e_4$ inside $C$, compute its projection (curve $\kappa_2$) on $C$.
		\item Compute the intersection of $\kappa_1$ and $\kappa_2$.
		\vspace*{-\baselineskip}
	\end{itemize}
	& $n^4$ \\
	\hline
	
	9 & 2 & 2 &   &   &   &
	For any four obstacle edges (two outside $C$, and two inside or outside $C$), compute the incidence lines induced by the four obstacle edges.
	& $n^4$ \\
	\hline
	
	10 & 2 &   & 2 &   &   &
	\begin{itemize}[label={--},noitemsep,leftmargin=*,topsep=0pt,partopsep=0pt]
		\item For any two obstacle edges $e_1$ and $e_2$ outside $C$, compute the projection (area $A$) of $P(e_1, e_2)$ on $C$.
		\item For any two obstacle edges $e_3$ and $e_4$ inside $C$, compute the radial line segment $s$ originating at $t$ and intersecting $e_3$ and $e_4$, and determine the intersection point $b$ between $s$ and $C$.
		\item Note that a distinct $abc$ is given by a point $b$ lying on $A$.
		\vspace*{-\baselineskip}
	\end{itemize}
	& $n^4$ \\
	\hline
	
	11 & 2 &   &   & 2 &   &
	\begin{itemize}[label={--},noitemsep,leftmargin=*,topsep=0pt,partopsep=0pt]
		\item For any two obstacle endpoints $v_1$ and $v_2$ (within distance $2r$ from $t$), compute the plane $\Pi$ that contains $v_1$, $v_2$, and $t$.
		\item For any two obstacle edges $e_1$ and $e_2$ outside $C$, compute their intersections with $\Pi$.
		\vspace*{-\baselineskip}
	\end{itemize}
	& $n^4$ \\
	\hline
	
	12 & 3 & 1 &   &   &   &
	For any four obstacle edges (three outside $C$, and one inside or outside $C$), compute the incidence lines induced by the four obstacle edges.
	& $n^4$ \\
	\hline
	
	13 & 3 &   & 1 &   &   &
	\begin{itemize}[label={--},noitemsep,leftmargin=*,topsep=0pt,partopsep=0pt]
		\item For any three obstacle edges $e_1$, $e_2$, and $e_3$ outside $C$, compute the projection (curve $\kappa_1$) of the hyperboloid that contains $e_1$, $e_2$, and $e_3$ on $C$.
		\item For any obstacle edge $e_4$ inside $C$, compute its projection (curve $\kappa_2$) on $C$.
		\item Compute the intersection of $\kappa_1$ and $\kappa_2$.
		\vspace*{-\baselineskip}
	\end{itemize}
	& $n^4$ \\
	\hline
	
	14* & 3 &   &   & 1 &   & See main text in Section \ref{comp}. & $n^4$ \\
	\hline
	
	15 & 4 &   &   &   &   &
	For any four obstacle edges outside $C$, compute the incidence lines induced by the four obstacle edges.
	& $n^4$ \\
	\hline
	
	16 & 1 &   & 1 &   & 1 &
	\begin{itemize}[label={--},noitemsep,leftmargin=*,topsep=0pt,partopsep=0pt]
		\item For any obstacle edge $e_1$ (within distance $2r$ from $t$), compute the curve $\kappa_1$ on which point $b$ (i.e., the center of the sphere $D$ tangent to $e_1$) lies.
		\item For any obstacle edge $e_2$ inside $C$, compute its projection (curve $\kappa_2$) on $C$.
		\item Compute the intersection point $b$ between $\kappa_1$ and $\kappa_2$.
		\item Compute the plane $\Pi$ that contains $t$, $b$, and the point of intersection/tangency between $e_1$ and $D$.
		\item For any obstacle edge $e_3$ outside $C$, compute the intersection of $\Pi$ and $e_3$.
		\item Assume that $\Pi$ is not the same as the plane that contains $b$ and $e_3$.
		\vspace*{-\baselineskip}
	\end{itemize}
	& $n^3$ \\
	\hline
	
	17 & 1 & 1 & 1 &   & 1 &
	\begin{itemize}[label={--},noitemsep,leftmargin=*,topsep=0pt,partopsep=0pt]
		\item This is similar to Case 16 above with a different assumption at the end.
		\item Assume that $\Pi$ is the same as the plane that contains $b$ and $e_3$.
		\item For any obstacle edge $e_4$ inside or outside $C$, compute the intersection of $\Pi$ and $e_4$.
		\vspace*{-\baselineskip}
	\end{itemize}
	& $n^4$ \\
	\hline
	
	18 & 1 &   & 1 & 1 & 1 &
	This can be thought of as Case 16 with an additional obstacle vertex supporting $\sigma_{bct}$ (which does not contribute to additional extremal trajectories).
	& $n^3$ \\
	\hline
	
	19 & 1 &   & 2 &   & 1 &
	This can be thought of as Case 16 with an additional obstacle edge intersecting $bt$ (which does not contribute to additional extremal trajectories).
	& $n^3$ \\
	\hline
	
	20 & 2 & 1 &   &   & 1 &
	\begin{itemize}[label={--},noitemsep,leftmargin=*,topsep=0pt,partopsep=0pt]
		\item For any three obstacle edges $e_1$, $e_2$, and $e_3$ (two outside $C$ and one inside or outside $C$), compute the projection (curve $\kappa_1$) of the hyperboloid that contains $e_1$, $e_2$, and $e_3$ on $C$.
		\item For any obstacle edge $e_4$ (within distance $2r$ from $t$), compute the curve $\kappa_2$ on which point $b$ (i.e., the center of the sphere $D$ tangent to $e_4$) lies.
		\item Compute the intersection point $b$ between $\kappa_1$ and $\kappa_2$.
		\vspace*{-\baselineskip}
	\end{itemize}
	& $n^4$ \\
	\hline
	
	21 & 2 &   & 1 &   & 1 &
	\begin{itemize}[label={--},noitemsep,leftmargin=*,topsep=0pt,partopsep=0pt]
		\item This is similar to Case 16 with a different assumption at the end.
		\item Assume that $\Pi$ is the same as the plane that contains $b$ and $e_3$.
		\item For any obstacle edge $e_4$ outside $C$, compute the intersection of $\Pi$ and $e_4$.
		\vspace*{-\baselineskip}
	\end{itemize}
	& $n^4$ \\
	\hline
	
	22* & 2 &   &   & 1 & 1 & See main text in Section \ref{comp}. & $n^4$ \\
	\hline
\end{longtable}
\normalsize

For conciseness, certain scenarios are omitted from Table \ref{tab1} given their trivial nature, and they include those involving the isolation of a feasible articulated trajectory due to incidence of its line segment (i.e., $ab$, $bc$, or $bt$) with obstacle endpoints (i.e., support vertices).
In addition, the cases in which $\gamma_{ct}$ is tangent to the surface of an obstacle triangle are omitted due to their similarity to those where $\gamma_{ct}$ is tangent to an obstacle edge.

Note that the set of extremal articulated trajectories, in all cases besides those denoted by * in Table \ref{tab1}, can be found by using a sequence of simple geometric operations (e.g., computing the intersection of two line segments or planes).
An extremal articulated trajectory in each of the five cases designated by * can be computed by using an algebraic-geometric approach.
Given that these five cases are characteristically similar, as a demonstration, let us take a closer look at Case 1 from an algebraic point of view.

Recall that, in Case 1, $ab$ intersects an edge $e_1$, $bt$ intersects an edge $e_2$, and $bc$ intersects an edge $e_3$.
In addition, $\sigma_{bct}$ intersects an obstacle endpoint $v$.
Let $p$ (resp. $q$) denote the intersection between $ab$ and $e_1$ (resp. $bt$ and $e_2$).

Edges $e_1$, $e_2$, and $e_3$ can be expressed in parametric form as follows:
\begin{gather*}
e_1 = \{ (1 - \lambda_1) u_1 + \lambda_1 v_1 : \lambda_1 \in \mathbb{R}, 0 \leq \lambda_1 \leq 1 \} \\
e_2 = \{ (1 - \lambda_2) u_2 + \lambda_1 v_2 : \lambda_2 \in \mathbb{R}, 0 \leq \lambda_2 \leq 1 \} \\
e_3 = \{ (1 - \lambda_3) u_3 + \lambda_1 v_3 : \lambda_3 \in \mathbb{R}, 0 \leq \lambda_3 \leq 1 \}
\end{gather*}
where $u_i, v_i \in \mathbb{R}^3$ are the two endpoints of edge $e_i$ for $i = 1, 2, 3$.
Analogously, line segments $bt$ and $pq$ can be defined, respectively, as
\begin{gather*}
bt = \{ (1 - \lambda_{bt}) b + \lambda_{bt} t : \lambda_{bt} \in \mathbb{R}, 0 \leq \lambda_{bt} \leq 1 \} \\
pq = \{ (1 - \lambda_{pq}) p + \lambda_{pq} q : \lambda_{pq} \in \mathbb{R}, 0 \leq \lambda_{pq} \leq 1 \}
\end{gather*}
where $b, p, q, t \in \mathbb{R}^3$.
Since $bt$ intersects $e_3$,
\begin{equation}
\label{eq1}
(1 - \lambda_{bt}) b + \lambda_{bt} t = (1 - \lambda_3) u_3 + \lambda_1 v_3
\end{equation}
Furthermore, given that $b$ must lie on a sphere of fixed radius $r$ centered at $t$,
\begin{equation}
\label{eq2}
(b - t) \cdot (b - t) = r^2
\end{equation}
where $\cdot$ designates the dot product.
Note that $b$ must also be located on $pq$.
Thus,
\begin{equation}
\label{eq3}
b = (1 - \lambda_{pq}) p + \lambda_{pq} q
\end{equation}
Since $p$ and $q$ are points on $e_1$ and $e_2$, respectively,
\begin{gather}
\label{eq4}
p = (1 - \lambda_1) u_1 + \lambda_1 v_1 \\
\label{eq5}
q = (1 - \lambda_2) u_2 + \lambda_1 v_2
\end{gather}
The normal vector $n$ of the plane $\Pi$ defined by points $t$, $b$, and $v$ can be expressed as
\[ n = (b - t) \times (v - t) \]
where $\times$ denotes the cross product.
Since $p$ (and thus $q$) must be in plane $\Pi$,
\begin{equation}
\label{eq6}
\begin{split}
n \cdot (p - t) &= 0 \\ 
\big[ (b - t) \times (v - t) \big] \cdot (p - t) &= 0
\end{split}
\end{equation}
At last, without loss of generality, $t$ is assumed to be located at the origin:
\begin{equation}
\label{eq7}
t = (0, 0, 0)
\end{equation}
Note that Equations \ref{eq1}, \ref{eq3}, \ref{eq4}, \ref{eq5}, and \ref{eq7} are in vector form, and each of these equations yields three scalar equations (i.e., $x$, $y$, and $z$ components).
For instance, Equation \ref{eq1} can be expressed, in scalar form, as
\begin{gather*}
(1 - \lambda_{bt}) b_x + \lambda_{bt} t_x = (1 - \lambda_3) u_{3x} + \lambda_1 v_{3x} \\
(1 - \lambda_{bt}) b_y + \lambda_{bt} t_y = (1 - \lambda_3) u_{3y} + \lambda_1 v_{3y} \\
(1 - \lambda_{bt}) b_z + \lambda_{bt} t_z = (1 - \lambda_3) u_{3z} + \lambda_1 v_{3z}
\end{gather*}
Similarly, each of the vector variables $b$, $t$, $p$, and $q$ consists of three scalar components (e.g., $b = (b_x, b_y, b_z)$).
Hence, we have, in total, a system of 17 (scalar) parametric equations with 17 unknown (scalar) variables, which can be solved 
in constant time.

All in all, the worst-case running time for computing an extremal articulated trajectory is $O(n^4)$.

\subsubsection{Validating extremal articulated trajectories}
An extremal articulated trajectory is deemed feasible if and only if i) line segment $ab$ and ii) circular sector $\sigma$ do not intersect with any triangular obstacle in its interior.
Checking for these scenarios can be reduced to the following two query problems -- i) ray shooting query and ii) circular sector emptiness query.

\subsubsection*{Ray shooting queries}
A ray shooting query among $n$ interior disjoint triangles can be performed to determine whether a query line segment $ab$ intersects with any triangular obstacle in its interior (i.e., whether a query line segment stabs through the interior of an obstacle triangle).
According to de Berg et al. \cite{deberg94efficient} and Pellegrini \cite{pellegrini93ray}, such a query can be answered in $O(\log n)$ time using $O(n^{4+\epsilon})$ preprocessing time and space, for any constant $ \epsilon > 0$.
Alternatively, by using the data structure proposed by Agarwal and Matousek \cite{agarwal94range}, which provides a trade-off between space and time, a ray shooting query (amidst $n$ triangles) can be answered in $O(n^{1+\epsilon} / m^{1/4})$ time using $O(m)$ storage and $O(m^{1+\epsilon})$ preprocessing time, for any $n \leq m \leq n^4$.
By employing this fairly complex data structure, given that we have $O(n^4)$ queries, when $m = n$, we obtain $O(n^{3/4 + \epsilon})$ time per query and $O(n^{19/4 + \epsilon})$ total time.


\subsubsection*{Circular sector emptiness queries}
In this section, we address a special instance of the circular sector emptiness query problem in 3D. \\

\noindent\textit{Given a set $P$ of $n$ triangles in $\mathbb{R}^3$, preprocess it so that, for a query circular sector $\sigma$ with a fixed radius $r$ and an endpoint of its arc located at fixed point $t$, one can quickly determine whether $\sigma$ intersects $P$.} \\

Let $\Pi$ be a plane passing through point $t$.
We can parameterize $\Pi$ by using two variables $(I, \Omega)$ (see Figure \ref{fig_plane}, where $\Pi_0$ is the plane with $I = 0$ and $\Omega = 0$).

\begin{figure}[h]
	\centering
	\includegraphics[scale=0.19]{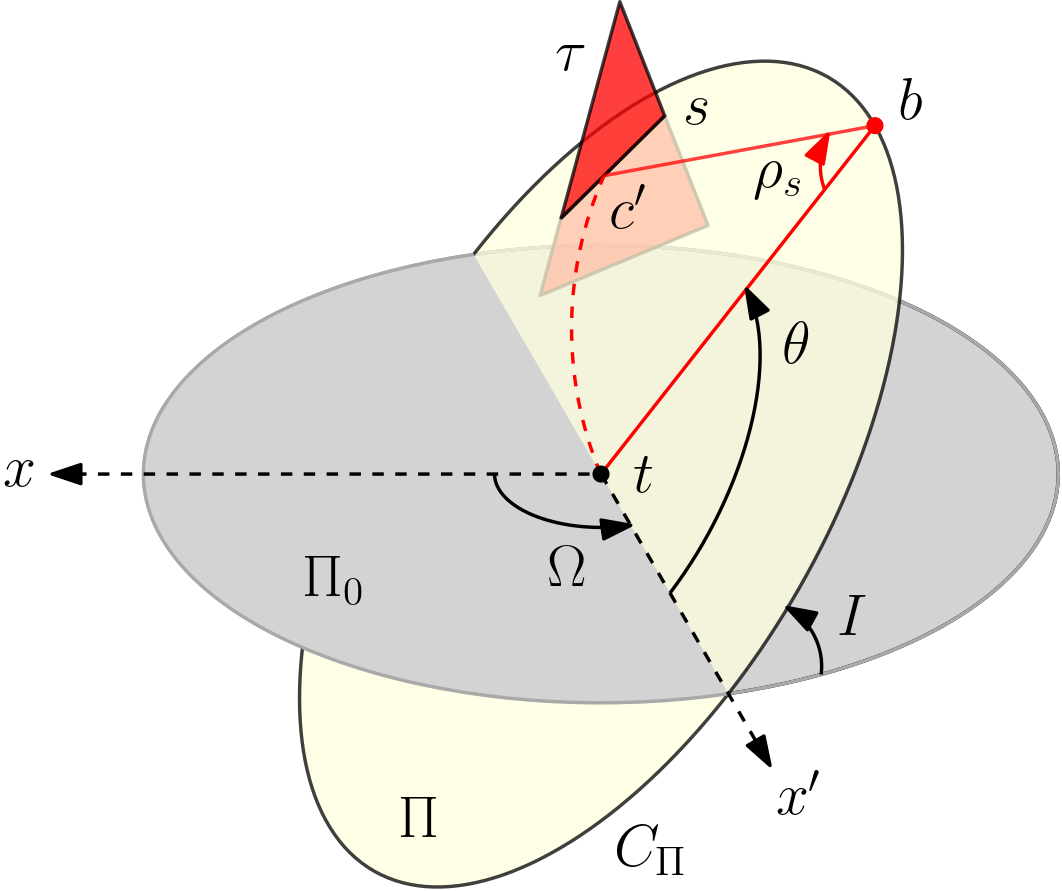}
	\caption{Characterization of $\rho_s$ as function of $\theta$ in $\Pi$.}
	\label{fig_plane}
\end{figure}

Let ${C_\Pi}'$ be the circle in $\Pi$ with radius $\sqrt{2}r$ and the center at $t$.
Let $\tau$ be a triangle that intersects $\Pi$ in a line segment $s$.
Observe that a query circular sector $\sigma$ in $\Pi$, as described above, can only intersect with $\tau$ if $s$ intersects the interior of
${C_\Pi}'$.

Let $C_\Pi$ denote the circle of radius $r$ centered at $t$ in $\Pi$.
For any point $b \in C_\Pi$, let $\theta \in [0, 2\pi)$ be the angle of $tb$ relative to the $x$-axis of $\Pi$ (labeled as the $x'$-axis in Figure \ref{fig_plane}).
Let $bc'$ denote the farthest radius from $bt$, with respect to the circle of radius $r$ centered at $b$ in $\Pi$, before the minor circular sector bounded by radii $bt$ and $bc'$ intersects with $s$.
Let $\rho_s$ be the angle of $bc'$ with respect to $bt$.
Recall that line segment $bc$ of the probe, after the insertion of the probe has completed, may only rotate up to $\pi/2$ radians in either direction in plane $\Pi$.
Thus, $\rho_s \in [0, \pi/2]$.
Here we consider $\rho_s$ to be the clockwise angle from $bt$ in plane $\Pi$.
The other case can be handled symmetrically.

We fix plane $\Pi$ (i.e., $I$ and $\Omega$) and proceed to characterize $\rho_s$ as a function of $\theta$.
Two different cases are examined separately, depending on whether line segment $s$ lies 1) inside $C_\Pi$ or 2) outside $C_\Pi$ and inside ${C_\Pi}'$.
Note that a line segment $s$ may intersect $C_\Pi$ or ${C_\Pi}'$ (or both).
Since a line segment may intersect a circle at most two times, such a line segment can be partitioned into at most three line segments -- one inside and two outside the circle -- before proceeding with the following analysis.

\paragraph{Case 1: $s$ lies inside $C_\Pi$}
For brevity, the quarter-circular sector associated with a point $b$ (i.e., the maximum possible area swept by $bc$ to reach $t$), where the angle of $tb$ relative to the $x'$-axis is $\theta$, is henceforth referred to as the \emph{quart-sector of $\theta$}.

Let $\phi_{s,1}$, $\phi_{s,2}$, and $\phi_{s,3}$ be defined as follows (see Figure \ref{fig_in_a-d}A).
Angle $\phi_{s,1}$ is the smallest angle $\theta$ at which the circular arc of the quart-sector of $\theta$ intersects with $s$ at one of its endpoints or an interior point.
Angle $\phi_{s,2}$ is the smallest angle $\theta$ at which the radius $bt$ of the quart-sector of $\theta$ intersects with $s$ at one of its endpoints.
Angle $\phi_{s,3}$ is the largest angle $\theta$ at which the radius $bt$ of the quart-sector of $\theta$ intersects with $s$ at one of its endpoints.
In other words, as $\theta$ varies from 0 to $2\pi$, $\phi_{s,1}$ and $\phi_{s,3}$ are the angles $\theta$ at which the quart-sector of $\theta$ first and last intersects with $s$, respectively.

\begin{figure}[h]
	\centering
	\includegraphics[scale=0.18]{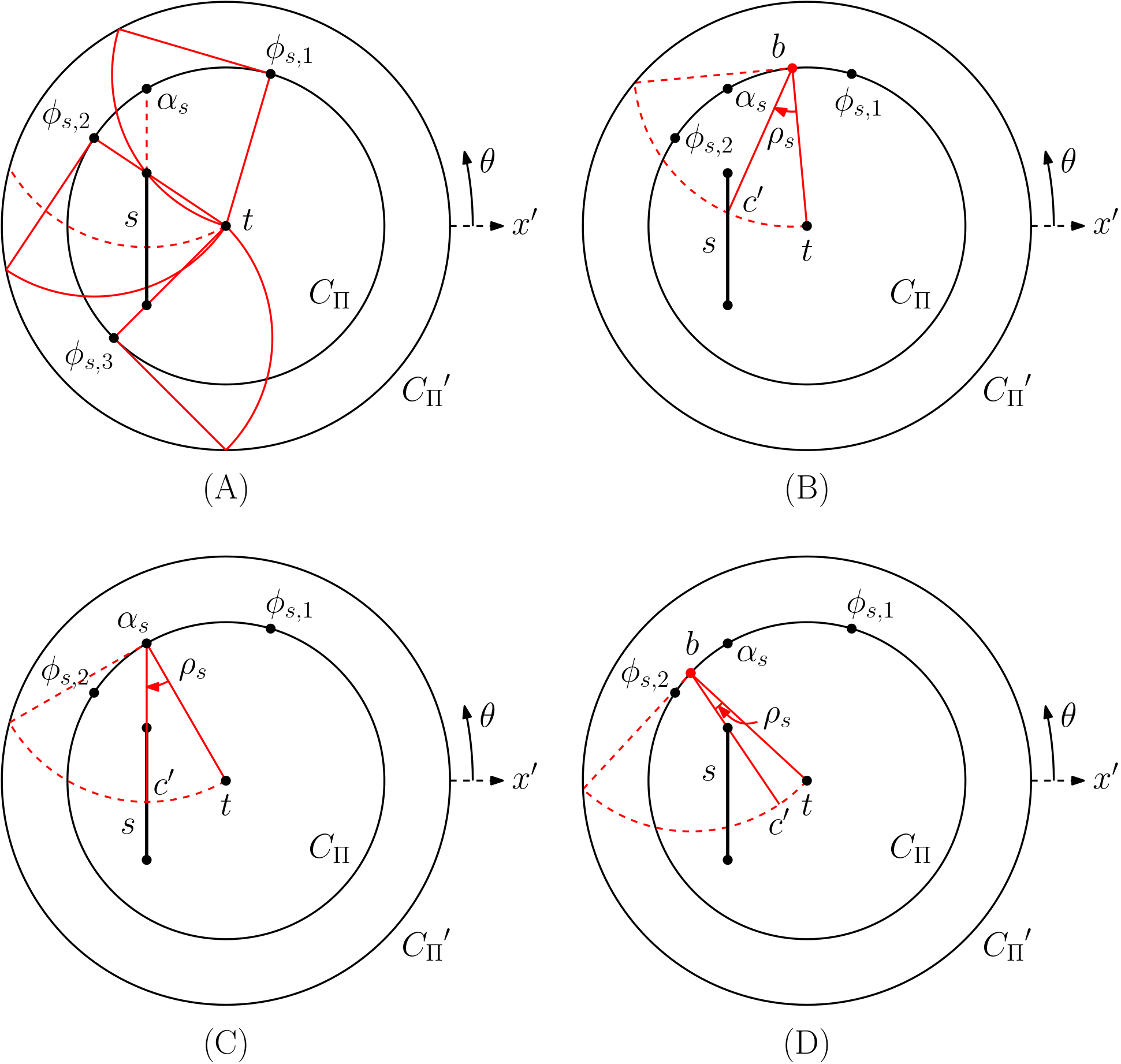}
	\caption{(A) Characterizing $\rho_s(\theta)$ for $\theta \in [\phi_{s,1}, \phi_{s,2}]$ in Case 1.
		Illustrations for $\rho_s(\theta)$ when (B) $\phi_{s,1} < \theta < \alpha_s$, (C) $\theta = \alpha_s$, and (D) $\alpha_s < \theta < \phi_{s,2}$, respectively.}
	\label{fig_in_a-d}
\end{figure}

For a line segment $s$ lying inside $C_\Pi$, as shown in Figure \ref{fig_in_a-d}, we are only concerned with computing $\rho_s$ for $\theta \in [\phi_{s,1}, \phi_{s,2}]$, given that $\theta \in [\phi_{s,2}, \phi_{s,3}]$ is infeasible due to intersection of $bt$ with $s$, and $\rho_s = \pi/2$ for $\theta \in [0, \phi_{s,1}] \cup [\phi_{s,3}, 2\pi)$.

Let $\alpha_s$ be the angle $\theta$ of the intersection point between $C_\Pi$ and the supporting line of $s$.
For $\theta \in [\phi_{s,1}, \phi_{s,2}]$, $\rho_s(\theta)$ can be represented by a piecewise continuous curve, which consists of at most two pieces, corresponding to two sub-intervals $[\phi_{s,1}, \alpha_s]$ and $[\alpha_s, \phi_{s,2}]$.
Note that if $\phi_{s,1} \leq \alpha_s$, then the curve of $\rho_s(\theta)$ has two pieces; otherwise, $\rho_s(\theta)$ is composed of one single piece.

Let $c'$ denote the intersection point between line segment $s$ and the circle $D_\Pi$ of radius $r$ centered at $b$.
In the case of $\phi_{s,1} \leq \alpha_s$, for any $\theta \in [\phi_{s,1}, \alpha_s]$, as depicted in Figure \ref{fig_in_a-d}B, $\rho_s(\theta)$ is given by the angle between $bt$ and $bc'$.
For any $\theta \in [\alpha_s, \phi_{s,2}]$ (and $\theta \in [\phi_{s,1}, \phi_{s,2}]$ in the case of $\phi_{s,1} > \alpha_s$), $\rho_s(\theta)$ is the angle between $bt$ and $bc'$, where $bc'$ intersects an endpoint of $s$ at an interior point of $bc'$ (see Figure \ref{fig_in_a-d}D).
Observe that $\rho_s(\theta) = 0$ when $\theta = \phi_{s,2}$.
A plot of function $\rho_s(\theta)$ is shown in Figure \ref{fig_in_plot}.

\begin{figure}[h]
	\centering
	\includegraphics[scale=0.24]{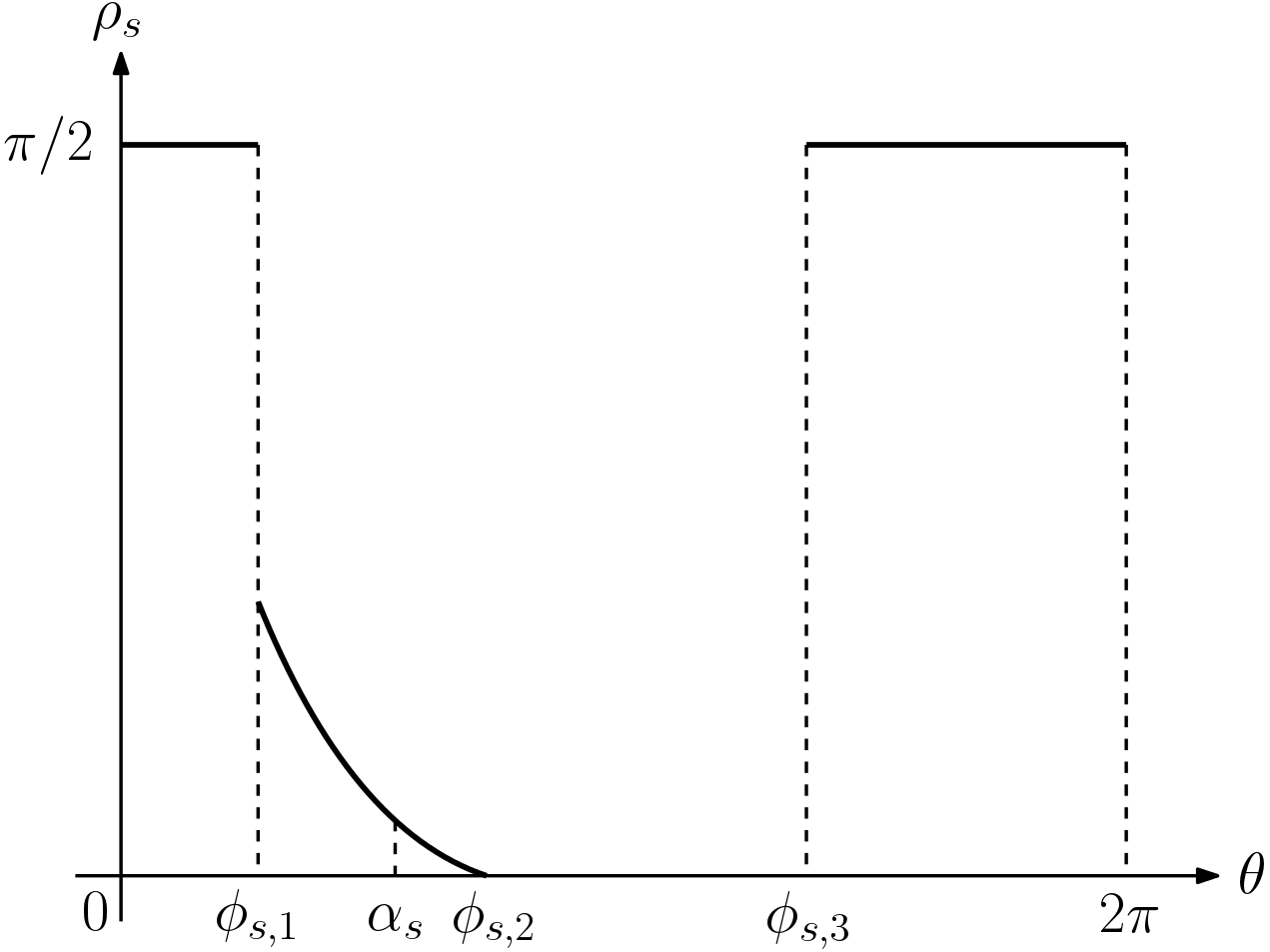}
	\caption{Illustration of function $\rho_s(\theta)$ in Case 1.}
	\label{fig_in_plot}
\end{figure}

\paragraph{Case 2: $s$ lies outside $C_\Pi$ and inside ${C_\Pi}'$}
For a line segment $s$ lying outside $C_\Pi$ and inside ${C_\Pi}'$, as depicted in Figure \ref{fig_out_a-d}, we only need to worry about computing $\rho_s$ for $\theta \in [\phi_{s,1}, \phi_{s,2}]$, given that $\rho_s= \pi/2$ for $\theta \in [0, \phi_{s,1}] \cup [\phi_{s,2}, 2\pi]$.

\begin{figure}[h]
	\centering
	\includegraphics[scale=0.18]{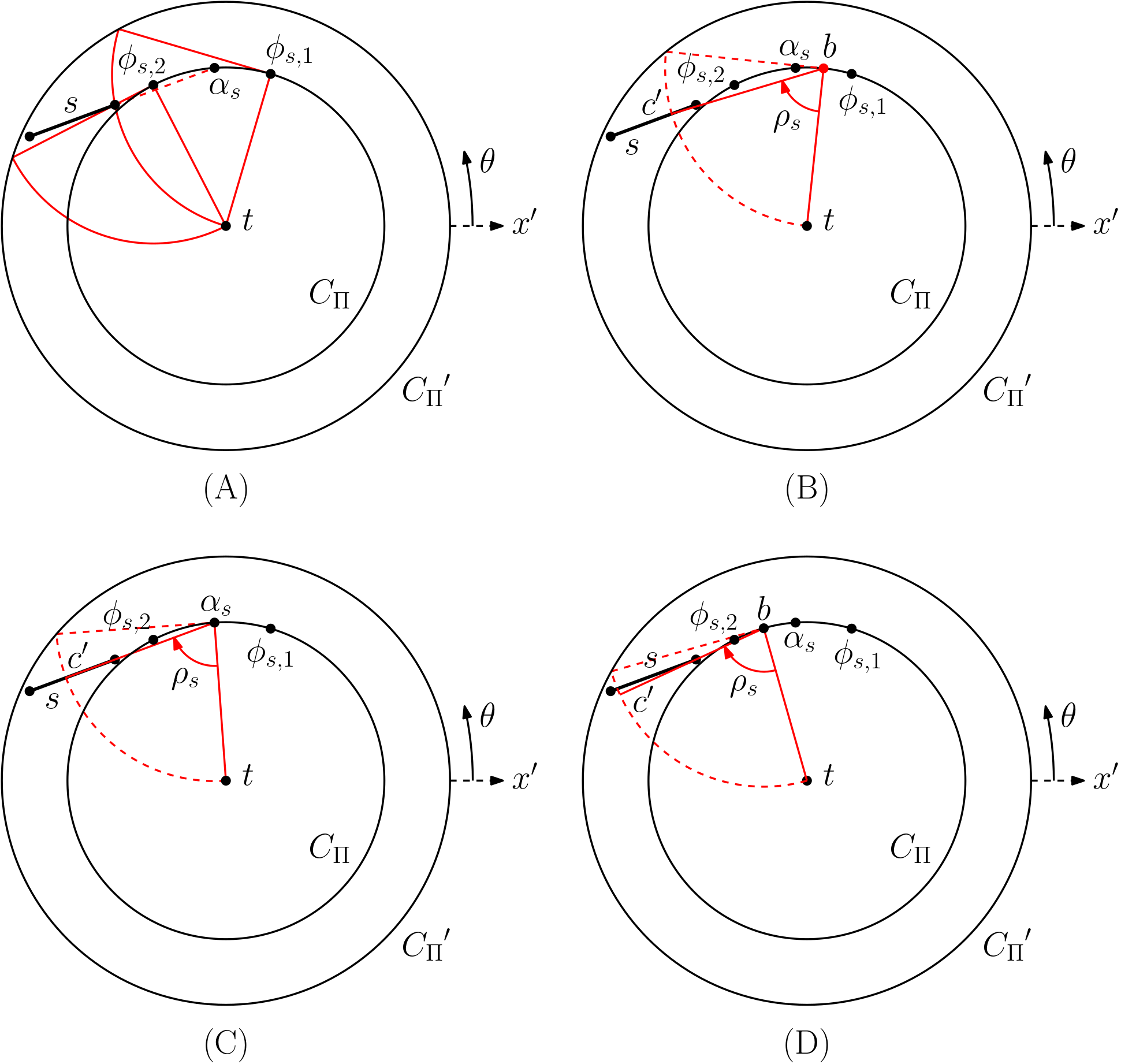}
	\caption{(A) Characterizing $\rho_s(\theta)$ for $\theta \in [\phi_{s,1}, \phi_{s,2}]$ in Case 2.
		Illustrations for $\rho_s(\theta)$ when (B) $\phi_{s,1} < \theta < \alpha_s$, (C) $\theta = \alpha_s$, and (D) $\alpha_s < \theta < \phi_{s,2}$, respectively.}
	\label{fig_out_a-d}
\end{figure}

\begin{figure}[h]
	\centering
	\includegraphics[scale=0.24]{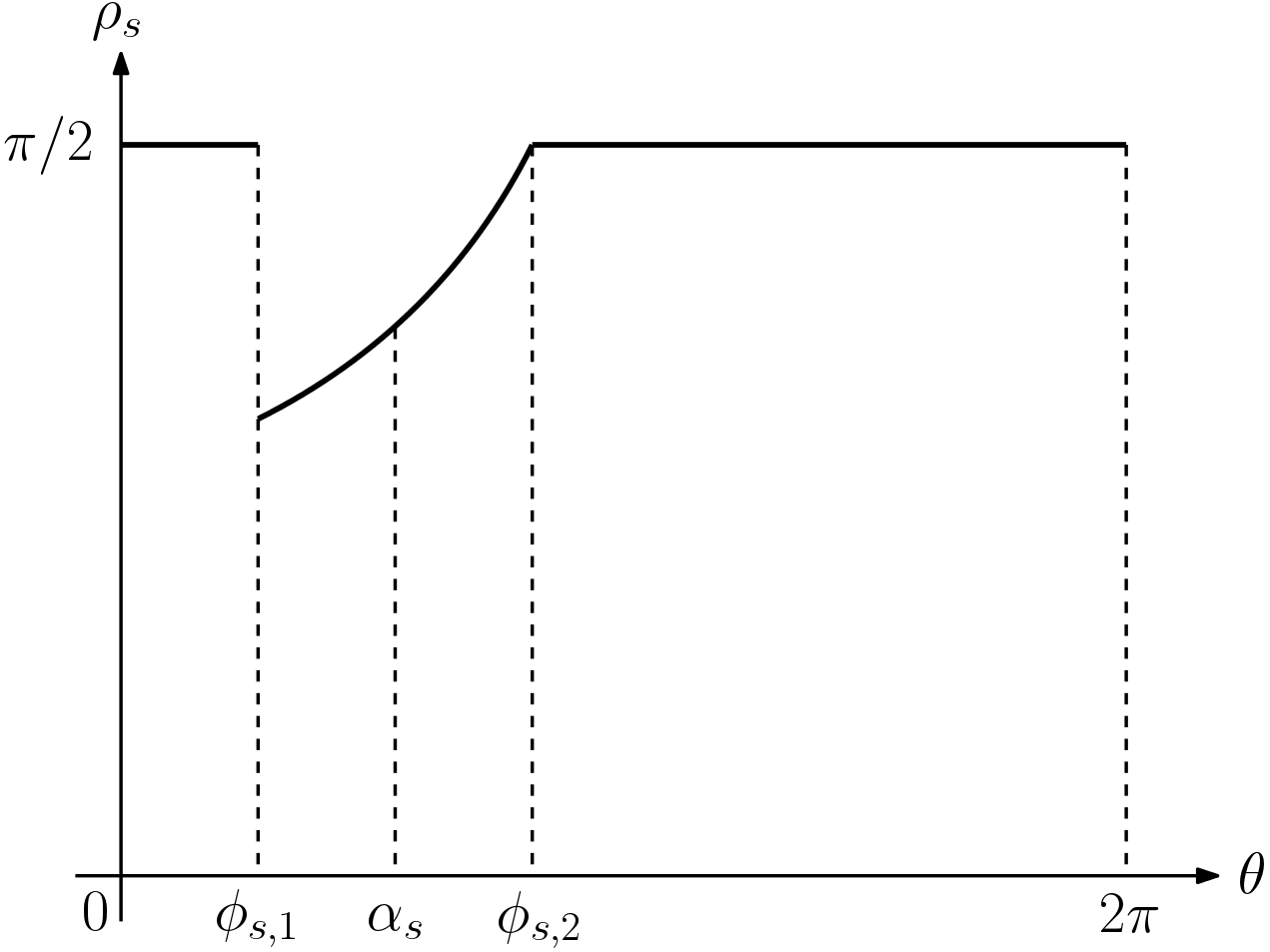}
	\caption{Illustration of function $\rho_s(\theta)$ in Case 2.}
	\label{fig_out_plot}
\end{figure}

For $\theta \in [\phi_{s,1}, \phi_{s,2}]$, $\rho_s(\theta)$ is characterized by a piecewise continuous curve, which consists of at most two pieces, corresponding to sub-intervals $[\phi_{s,1}, \alpha_s]$ and $[\alpha_s, \phi_{s,2}]$, where $\alpha_s$ is the angle $\theta$ of the intersection point, if any, between $C_\Pi$ and the supporting line of $s$.
Note that if $\alpha_s$ exists, then $\rho_s(\theta)$ has two pieces; otherwise, $\rho_s(\theta)$ is constituted of one single piece.

Notice that $\phi_{s,2}$ is defined differently from the previous case.
Here, $\phi_{s,2}$ is the largest $\theta$ at which the radius $bc'$ of the quart-sector of $\theta$ intersects with $s$ (at one of its endpoints).
In fact, when $\theta = \phi_{s,2}$, $\rho_s(\theta) = \pi/2$.
A plot of function $\rho_s(\theta)$ is shown in Figure \ref{fig_out_plot}. \\

Each of the curves $\rho_s(\theta)$ just described (for $\theta \in [\phi_{s,1}, \phi_{s,2}]$) is partially defined, continuous, and monotone over $\theta$.
Specifically, $\rho_s(\theta)$ is monotonically decreasing (resp. increasing) with respect to $\theta$ over the range of $[\phi_{s,1}, \phi_{s,2}]$ in Case 1 (resp. Case 2).
In fact, the curves behave like pseudo-line segments, since any two curves can only intersect at most once.

Observe that $\rho_s(\theta)$ is an inverse trigonometric function.
Nonetheless, we can define an algebraic function $f_s = \sin (\rho_s / 2)$ in terms of variables $x_b$ and $y_b$ (i.e., $x$- and $y$-coordinates of $b \in C_\Pi$).
The details of deriving $f_s$ are presented next.

\paragraph{Deriving algebraic function $f_s$}
\label{alg_func}
As one shall see, 
we can define an algebraic function $f_s$ based on $\rho_s(\theta)$.
Recall that, in Case 1, we are only concerned with characterizing $\rho_s(\theta)$ for  $\theta \in [\phi_{s,1}, \phi_{s,2}]$, which consists of at most two pieces, corresponding to two sub-intervals $[\phi_{s,1}, \alpha_s]$ and $[\alpha_s, \phi_{s,2}]$.
Let us consider the two sub-intervals individually.
Note that, in the following analysis, $\theta$ is represented by the $x$- and $y$-coordinates of $b \in C_\Pi$ (in order to obtain an algebraic expression).

\subparagraph{Sub-interval 1: $\theta \in [\phi_{s,1}, \alpha_s]$.}
For simplicity, let $\Pi$ be the $xy$-plane (in which line segment $s$ lies).
Without loss of generality, let $t$ be located at the origin.
Let $x_{c'}$ and $y_{c'}$ denote the $x$- and $y$-coordinates of $c'$, respectively.
Observe that
\begin{equation}
\label{eq8}
\sin \frac{\rho_s}{2} = \frac{\sqrt{{x_{c'}}^2 + {y_{c'}}^2}}{2r}
\end{equation}
Recall that $c'$ is the intersection point between line segment $s$ and the circle $D_\Pi$ of radius $r$ centered at $b$ (see Figure \ref{fig_alg_1}).
The coordinates of $c'$ can be expressed in terms of the coordinates of $b$ in the following manner.
Let $\ell$ be the supporting line of $s$.
Line $\ell$ can be represented by $y = mx + d$, where $m$ is the slope of $\ell$, and $d$ is the $y$-intercept of $\ell$.
Circle $D_\Pi$ can be formulated as $(x – x_b)^2 + (y – y_b)^2 = r^2$, where $x_b$ and $y_b$ are the $x$- and $y$-coordinates of $b$, respectively.
Then, $x_{c'}$ and $y_{c'}$ can be written as
\begin{gather}
\label{eq9}
x_{c'} = \frac{x_b + y_bm + dm \pm \sqrt{\delta}}{1 + m^2} \\
\label{eq10}
y_{c'} = \frac{d + x_bm + y_bm^2 \pm \sqrt{\delta}}{1 + m^2}
\end{gather}
where $\delta = r^2 (1 + m^2) - (y_b - mx_b - d)^2$.
Let $f_s$ = $\sin (\rho_s / 2)$ (i.e., Equation \ref{eq8}).
Note that $f_s \in [0, \sqrt{1/2}]$.
Clearly, by substituting Equations \ref{eq9} and \ref{eq10} into Equation \ref{eq8}, $f_s$ can be expressed algebraically as a function of $x_b$ and $y_b$, where $b \in C_\Pi$ and ${x_b}^2 + {y_b}^2 = r^2$.

\begin{figure}[h]
	\centering
	\includegraphics[scale=0.19]{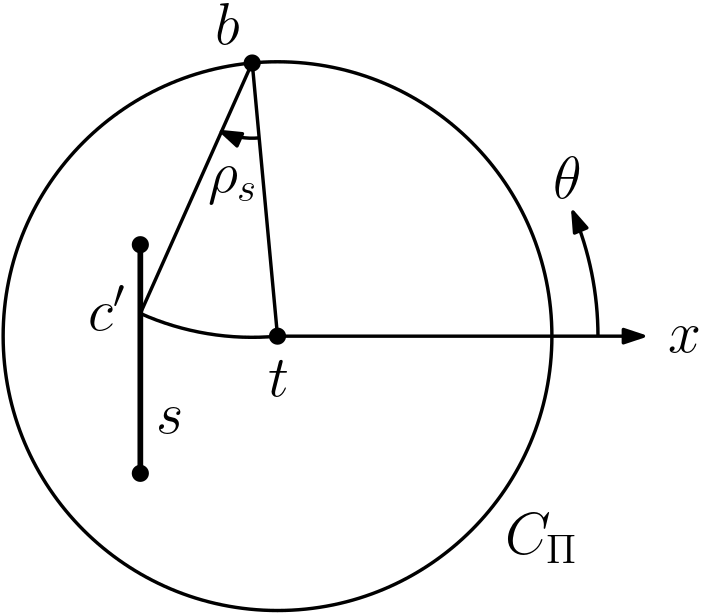}
	\caption{Case for $\theta \in [\phi_{s,1}, \alpha_s]$.}
	\label{fig_alg_1}
\end{figure}

\subparagraph{Sub-interval 2: $\theta \in [\alpha_s, \phi_{s,2}]$.}
Observe that Equation \ref{eq8} also holds true for this sub-interval.
Let $p$ be the intersection point between $s$ and $bc'$ (see Figure \ref{fig_alg_2}).
Note that $p$ is an endpoint of $s$.
Let $x_p$ and $y_p$ denote the $x$- and $y$-coordinates of $p$.
Then, $x_{c'}$ and $y_{c'}$ can be expressed as
\begin{gather}
\label{eq11}
x_{c'} = x_b + \sqrt{\frac{r^2}{1 + (\frac{y_p - y_b}{x_p - x_b})^2}} \\
\label{eq12}
y_{c'} = y_b + \sqrt{\frac{r^2}{1 + (\frac{y_p - x_b}{y_p - y_b})^2}}
\end{gather}
As before, we let $f_s$ = $\sin (\rho_s / 2)$, and $f_s$ can be expressed algebraically as a function of $x_b$ and $y_b$ by substituting Equations \ref{eq11} and \ref{eq12} into Equation \ref{eq8}. \\

\begin{figure}[h]
	\centering
	\includegraphics[scale=0.19]{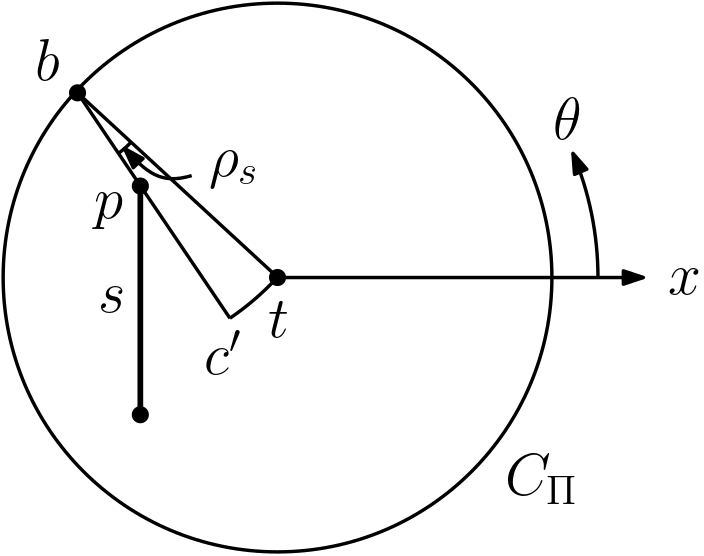}
	\caption{Case for $\theta \in [\alpha_s, \phi_{s,2}]$.}
	\label{fig_alg_2}
\end{figure}

A similar derivation of $f_s$ can also be performed in Case 2.
Since $\rho_s$ is partially defined, continuous, and monotone over $\theta \in [\phi_{s,1}, \phi_{s,2}]$, so must be function $f_s = \sin (\rho_s / 2)$. \\

At this point, we are in position to claim a new unified data structure for solving the circular sector emptiness query problem in 2D.
This $\mathbb{R}^2$ data structure simplifies the two-step approach described in \cite{teo20traj} (which consists of a circular arc intersection query and a circular sector emptiness query) while having the same time and space complexity.
Specifically, we construct two lower envelopes $V_1$ and $V_2$ of the piecewise algebraic curves $f_s$ for all given line segments $s \in P$ as follows: $V_1$ is the lower envelope of the curves $f_s$ for $y_b \geq 0$ (i.e., $0 \leq \theta \leq \pi$), whereas $V_2$ is for $y_b < 0$ (i.e., $\pi < \theta < 2\pi$).
Note that $V_1$ and $V_2$ are computed as functions of $x_b$ only, given that $y_b = + (r^2 – {x_b}^2)^{1/2}$ for $y_b \geq 0$, and $y_b = - (r^2 – {x_b}^2)^{1/2}$ for $y_b < 0$.

Since each pair of curves $\rho_s$ intersect in at most one point, any two curves $f_s$ do too.
Thus, the size of the lower envelope is bounded by the third-order Davenport-Schinzel sequence, whose length is at most $O(n \alpha(n))$, where $\alpha(n)$ is the inverse Ackermann function.
The lower envelope can be computed in $O(n \log n)$ time \cite{Hersh89find,sharir95dav}.

Given a query circular sector $\sigma$, let $b_\sigma$ be the apex of $\sigma$.
If $y_{b_\sigma} \geq 0$, then $x_{b_\sigma}$ is looked up in $V_1$ by using a binary search, which takes $O(\log n)$ time; otherwise, $x_{b_\sigma}$ is looked up in $V_2$.
Let $\rho$ be the acute angle between the two bounding radii of $\sigma$.
If $\sin (\rho / 2)$ is less than $f_s(x_b)$ for all line segments $s \in P$, then $\sigma$ does not intersect $P$.
Hence, we have the following result.

\begin{thm}
A set $P$ of $n$ line segments in $\mathbb{R}^2$ can be preprocessed in $O(n \log n)$ time into a data structure of size $O(n \alpha(n))$ so that, for a query circular sector $\sigma$ with a fixed radius $r$ and a fixed arc endpoint $t$, one can determine if $\sigma$ intersects $P$ in $O(\log n)$ time.
\end{thm}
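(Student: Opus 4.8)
The plan is to assemble the data structure directly from the curve machinery developed in the preceding pages. For each line segment $s \in P$ I would first normalize it by splitting it at its (at most two) intersections with $C_\Pi$, so that every resulting sub-segment lies entirely inside $C_\Pi$ (Case 1) or entirely outside $C_\Pi$ and inside ${C_\Pi}'$ (Case 2); a segment outside ${C_\Pi}'$ can be discarded since it can never meet a query sector. This splitting only changes the number of segments by a constant factor, so it does not affect the asymptotics. For each sub-segment I compute the partially defined curve $f_s(x_b) = \sin(\rho_s/2)$ using Equations \ref{eq8}--\ref{eq12}, restricting to the upper semicircle $y_b \ge 0$ for one copy and the lower semicircle $y_b < 0$ for another, so that $f_s$ becomes a (piecewise) algebraic function of the single variable $x_b$ via $y_b = \pm(r^2 - x_b^2)^{1/2}$.

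Next I would build the two lower envelopes $V_1$ (over $x_b$ with $y_b \ge 0$) and $V_2$ (over $x_b$ with $y_b < 0$) of the families $\{f_s\}$. The key structural input, already established in the excerpt, is that each $\rho_s(\theta)$ is continuous and monotone on its interval of definition and that any two of these curves cross at most once; hence the same is true of the $f_s$, which therefore behave as pseudo-segments. This bounds the combinatorial complexity of each lower envelope by a Davenport--Schinzel sequence of order three, i.e. $O(n\,\alpha(n))$, and standard divide-and-conquer (merging two envelopes of $m/2$ curves in $O(m\,\alpha(m))$ time via \cite{Hersh89find,sharir95dav}) computes each envelope in $O(n\log n)$ time and $O(n\,\alpha(n))$ space. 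Storing $V_1$ and $V_2$ as sorted arrays of breakpoints in $x_b$ gives the claimed preprocessing bounds.

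For a query sector $\sigma$ with apex $b_\sigma$ on $C_\Pi$ and half-opening parameter controlled by the acute angle $\rho$ between its bounding radii, I would locate $x_{b_\sigma}$ by binary search in $V_1$ or $V_2$ according to the sign of $y_{b_\sigma}$, in $O(\log n)$ time, retrieve the value of the lower envelope there, and compare it against $\sin(\rho/2)$: the sector is empty iff $\sin(\rho/2)$ is strictly below the envelope value, since the envelope records, for that apex, the largest sector half that avoids every segment. One must also handle the trivial case where the query plane is not the fixed plane $\Pi$ of the construction — but in this two-dimensional theorem $\Pi$ is fixed, so this does not arise; it is the three-dimensional lifting (not this theorem) that pays the extra cost of ranging over $(I,\Omega)$. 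The main obstacle I anticipate is not algorithmic but definitional: one must argue carefully that the ``at most two pieces'' decomposition of each $\rho_s$ and the single-crossing property survive the reduction to the single variable $x_b$ on each semicircle, and that the clockwise and counterclockwise rotation directions can be treated by the same structure (handling the symmetric case, as noted, by a mirror-image copy). Once that monotonicity-and-crossing bookkeeping is pinned down, the lower-envelope complexity bound and the query are routine.
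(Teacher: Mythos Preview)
Your proposal is correct and follows essentially the same route as the paper: split each segment so it lies in the appropriate annular region, associate to it the monotone pseudo-segment curve $f_s=\sin(\rho_s/2)$ over one semicircle, build the two lower envelopes $V_1,V_2$ using the single-crossing property to get the $O(n\,\alpha(n))$ Davenport--Schinzel bound and $O(n\log n)$ construction time, and answer a query by binary search on $x_{b_\sigma}$ followed by a comparison of $\sin(\rho/2)$ against the envelope value. The only cosmetic difference is that the paper also splits at $C_\Pi'$ (not just $C_\Pi$) and, after the theorem, emphasizes that an \emph{implicit} representation of the envelope suffices---storing for each piece only the inducing segment $s$ and answering a query by testing $\sigma$ directly against that $s$---but this does not change the argument or the bounds.
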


As it turns out, we do not need to explicitly describe the function of the lower envelope.
In fact, it suffices to have an implicit representation of the lower envelope -- specifically, its vertices and the line segments that induce its various pieces.
Generally, in order to obtain an implicit representation of the lower envelope (of a given set of curves), we require i) the number of times each pair of curves intersect, ii) the ability to compute the intersection points between any two given curves, and iii) the ability to determine whether one curve lies above or below another to the left and right of their intersection.
Consider the following approach for computing an implicit representation of the lower envelope.


\paragraph{Computing implicit lower envelopes}
\label{imp_low}
We only address the case of computing $V_1$ ($0 \leq \theta \leq \pi$), and the other case of $V_2$ ($\pi < \theta < 2\pi$) can be handled symmetrically.
Recall that each pair of curves $f_s$ intersect at most once.
Thus, as with computing the lower envelope of a set of $n$ lines segments using a divide-and-conquer algorithm, we first sort the curves $f_s$ by $\phi_{s,1}$.
Note that, for a given line segment $s$, $\phi_{s,1}$ can be computed without knowing $f_s$, since $\phi_{s,1}$ corresponds to the smallest angle $\theta$ at which the circle of radius $r$ centered at $b \in C_\Pi$ intersects with $s$.
We divide the set of curves $f_s$ into two equal sets by $\phi_{s,1}$, recursively compute the lower envelope of each set, and then merge the two lower envelopes to obtain the final result.
The intersection of any two curves $f_{s_i}$ and $f_{s_j}$ (which correspond to line segments $s_i$ and $s_j$, respectively) can be determined as follows.
There are two cases to be considered, depending on how $bc'$ is supported by $s_i$ and $s_j$.

\subparagraph{Case A.}
Line segment $bc'$ intersects $s_i$ and $s_j$ at their endpoints (see Figure \ref{fig_im_a}).
Note that the corresponding $x_b$ can be computed algebraically, given that $b$ is the intersection point between $C_\Pi$ and $bc'$.
Specifically, $C_\Pi$ can be represented as $x^2 + y^2 = r^2$.
Let $p$ and $q$ denote the endpoints of $s_i$ and $s_j$, respectively, that intersect $bc'$.
Then, the line that passes through $p$ and $q$ can be expressed as $y = m_{pq} x + d_{pq}$, where $m_{pq} = (y_q - y_p) / (x_q - x_p)$ and $d_{pq} = y_p - [(y_q - y_p)/(x_q - x_p)] x_p$.
Hence,
\[ x_b = \frac{-d_{pq} m_{pq} \pm \sqrt{\delta_{pq}}}{1 + {m_{pq}}^2} \]
where $\delta_{pq} = r^2 (1 + {m_{pq}}^2) - {d_{pq}}^2$.
We can also decide in constant time (based on the slope of $bc'$) which of the two curves $f_{s_i}$ and $f_{s_j}$, to the left and right of their intersection at $x_b$, lies above/below the intersection point.
For instance, in the example illustrated in Figure \ref{fig_im_a}, we can easily determine that $f_{s_j}$ lies below (resp. above) $f_{s_i}$ to the left (resp. right) of their intersection at $x_b$.

\begin{figure}[h]
	\centering
	\includegraphics[scale=0.19]{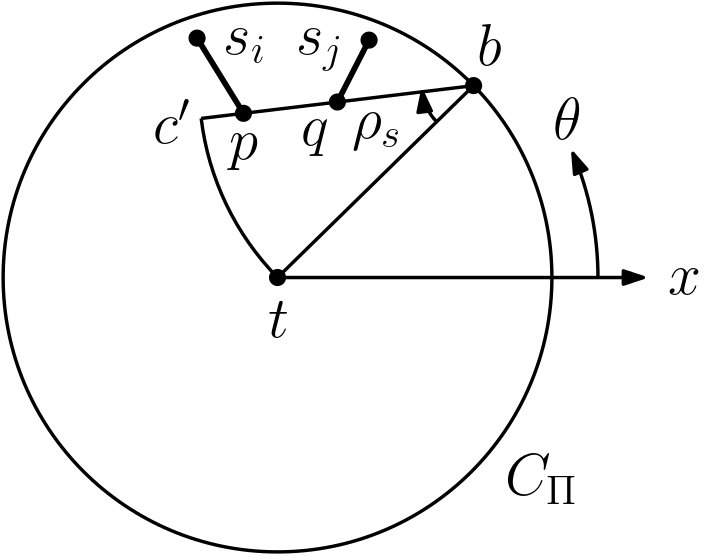}
	\caption{Case A.}
	\label{fig_im_a}
\end{figure}

\subparagraph{Case B.}
Line segment $bc'$ intersects $s_i$ such that $c'$ coincides with an interior point of $s_i$, and intersects $s_j$ at its endpoint (see Figure \ref{fig_im_b}).
As with the previous case, we can compute the corresponding $x_b$ algebraically.
Let $\ell$ denote the supporting line of $s_i$, and let $q$ be the intersection point between $s_j$ and $bc'$.
Line $\ell$ can be expressed as $y = m_\ell x + d_\ell$.
Observe that the line passing through $c'$ and $q$ intersects $C_\Pi$ at $b$; let the line be represented as $y = m_{c'q} x + d_{c'q}$.
Then,
\begin{gather}
\label{eq13}
x_b = \frac{-d_{c'q} m_{c'q} \pm \sqrt{\delta_{c'q}}}{1 + {m_{c'q}}^2} \\
\label{eq14}
y_b = \frac{d_{c'q} \pm m_{c'q} \sqrt{\delta_{c'q}}}{1 + {m_{c'q}}^2}
\end{gather}
where $m_{c'q} = (y_q - y_{c'}) / (x_q - x_{c'})$, $d_{c'q} = y_{c'} - [(y_q - y_{c'})/(x_q - x_{c'})] x_{c'}$, and $\delta_{c'q} = r^2 (1 + {m_{c'q}}^2) - {d_{c'q}}^2$.  Since the circle of radius $r$ centered at $b$ intersects $\ell$ at $c$,
\begin{gather}
\label{eq15}
x_{c'} = \frac{x_b + y_b m_\ell + d m_\ell \pm \sqrt{\delta_\ell}}{1 + {m_\ell}^2} \\
\label{eq16}
y_{c'} = \frac{d_\ell + x_b m_\ell + y_b {m_\ell}^2 \pm \sqrt{\delta_\ell}}{1 + {m_\ell}^2}
\end{gather}
where $\delta_\ell = r^2 (1 + {m_\ell}^2) - (y_b - {m_\ell} x_b - d_\ell)^2$.
Clearly, we can determine $x_b$ by using Equations \ref{eq13}--\ref{eq16}.
As before, we can also decide in $O(1)$ time which of the two curves $f_{s_i}$ and $f_{s_j}$, to the left and right of their intersection at $x_b$, lies above/below the intersection point. \\

\begin{figure}[h]
	\centering
	\includegraphics[scale=0.19]{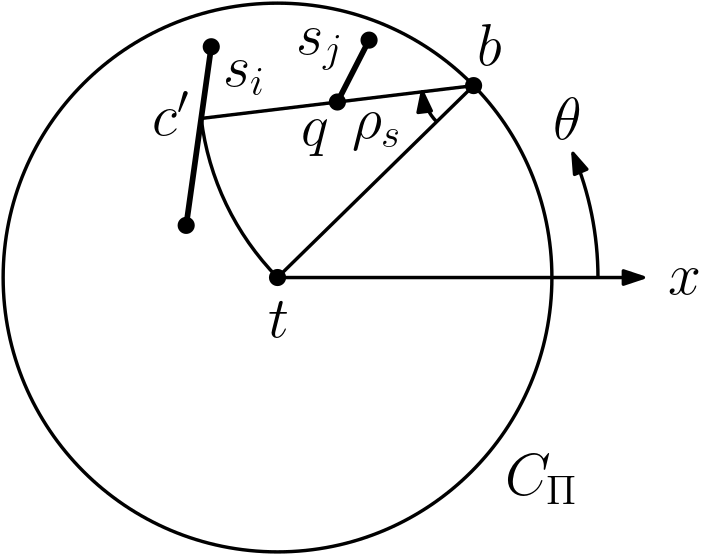}
	\caption{Case B.}
	\label{fig_im_b}
\end{figure}

Notice that we did not compute a full (functional) description of the lower envelope.
Instead, for each piece of the lower envelope, we store the information about the line segment $s$ associated with the curve $f_s$ to which the piece of the lower envelope belongs.

At query time, given a circular sector $\sigma$ (and its associated $x_b$), we can determine if $\sigma$ intersects any line segment in $P$, without having to know the equation for the curve $f_s$ defining the lower envelope at $x_b$, as follows.
We retrieve the line segment $s$ that induces the curve $f_s$ defining the lower envelope at $x_b$, and check if $bc$ of $\sigma$ intersects $s$.
If $bc$ of $\sigma$ does not intersect $s$, then $\sigma$ does not intersect $P$. \\

We now continue to derive our result for circular sector emptiness queries in 3D.
For each triangle $\tau \in P$, we define a trivariate function $\rho_\tau(I, \Omega, \theta)$, so that $\rho_s(\theta)$ is characterized with respect to all planes $(I, \Omega)$.
As with the two-dimensional case, $\rho_\tau(I, \Omega, \theta)$ is an inverse trigonometric function, based on which we can define an algebraic function $f_\tau$ (of constant degree) in terms of four variables as follows.

\paragraph{Deriving algebraic function $f_\tau$}
\label{alg_func_3d}
In the following analysis, Cartesian variables are used in place of $I$, $\Omega$, and $\theta$, in order to derive an algebraic expression for $f_\tau$.
As with characterizing the univariate function $\rho_s(\theta)$, there are two cases to be considered, depending on whether a given triangle $\tau$ is located 1) inside sphere $C$ or 2) outside sphere $C$ and inside sphere $C'$.
Given the similarity in analysis between the two cases, we only present the details for Case 1.
As before, we are concerned with computing $\rho_\tau$ for two contiguous sub-intervals of $(I, \Omega, \theta)$.
In one of the sub-intervals, $bc'$ always intersects the interior of $\tau$ at $c'$.
In the other sub-interval, $bc'$ intersects an edge of $\tau$.
Let us examine the two sub-intervals individually.

\subparagraph{Sub-interval 1: $bc'$ intersects the interior of $\tau$ at $c'$.}
Without loss of generality, let $t$ be located at the origin.
Let $x_{c'}$, $y_{c'}$, and $z_{c'}$ denote the $x$-, $y$-, and $z$-coordinates of $c'$, respectively.
Observe that
\begin{equation}
\label{eq17}
f_\tau = \sin \frac{\rho_\tau}{2} = \frac{\sqrt{{x_{c'}}^2 + {y_{c'}}^2 + {z_{c'}}^2}}{2r}
\end{equation}
Let $G$ be the plane passing through points $t$ and $b$.
Plane $G$ can be represented as $g_1 x + g_2 y + g_3 z + g_4 = 0$, where $g_4 = 0$ and $g_3 = (- g_1 x_b - g_2 y_b) / z_b$.  If we define $g = g_1 / g_2$, then the expression for $G$ becomes $gx + y + [(- g_1 x_b - g_2 y_b) / z_b] z = 0$.
Given that $G$ contains $c'$,
\begin{equation}
\label{eq18}
gx_{c'} + y_{c'} + \left(\frac{- g_1 x_b - g_2 y_b}{z_b}\right) z_{c'} = 0
\end{equation}
Let $H$ be the supporting plane of $\tau$.
Plane $H$ can be expressed as $h_1 x + h_2 y + h_3 z + h_4 = 0$, where $h_1$, $h_2$, $h_3$, and $h_4$ are the known parameters determined based on the three vertices of $\tau$.
Since $H$ contains $c'$,
\begin{equation}
\label{eq19}
h_1 x_{c'} + h_2 y_{c'} + h_3 z_{c'} + h_4 = 0
\end{equation}
Notice that $|bc'| = r$.
Thus,
\begin{equation}
\label{eq20}
(x_{c'} - x_b)^2 + (y_{c'} - y_b)^2 + (z_{c'} - z_b)^2 = r^2
\end{equation}
Based on the three Equations \ref{eq18}, \ref{eq19}, and \ref{eq20}, we can obtain an algebraic expression for $x_{c'}$, $y_{c'}$, and $z_{c'}$, respectively, in terms of variables $g$, $x_b$, $y_b$, and $z_b$.
By substituting the resulting algebraic expressions for $x_{c'}$, $y_{c'}$, and $z_{c'}$ into Equation \ref{eq17},  $f_\tau$ can be expressed algebraically as a function (of degree one) of $g$, $x_b$, $y_b$, and $z_b$, where $b \in C$ and ${x_b}^2 + {y_b}^2 + {z_b}^2 = r^2$.

\subparagraph{Sub-interval 2: $bc'$ intersects an edge of $\tau$.}
Let $s$ denote the edge of $\tau$ intersected by $bc'$.
Edge $s$ can be expressed in parametric form as
\begin{gather*}
x = (1 - \lambda_s) x_u + \lambda_s x_v \\
y = (1 - \lambda_s) y_u + \lambda_s y_v \\
z = (1 - \lambda_s) z_u + \lambda_s z_v
\end{gather*}
where $u$ and $v$ are the two endpoints of $s$, and $0 \leq \lambda_s \leq 1$.
Let $p$ denote the intersection point between $bc'$ and $s$.
The supporting line $\ell$ of $bc'$ can be represented as
\begin{gather*}
x = (1 - \lambda_\ell) x_b + \lambda_\ell x_p \\
y = (1 - \lambda_\ell) y_b + \lambda_\ell y_p \\
z = (1 - \lambda_\ell) z_b + \lambda_\ell z_p
\end{gather*}
where $\lambda_\ell \in \mathbb{R}$.
Given that $p$ lies in $s$, and $\ell$ contains $c'$,
\begin{gather}
\label{eq21}
x_{c'} = (1 - \lambda_\ell) x_b + \lambda_\ell [(1 - \lambda_s) x_u + \lambda_s x_v] \\
\label{eq22}
y_{c'} = (1 - \lambda_\ell) y_b + \lambda_\ell [(1 - \lambda_s) y_u + \lambda_s y_v] \\
\label{eq23}
z_{c'} = (1 - \lambda_\ell) z_b + \lambda_\ell [(1 - \lambda_s) z_u + \lambda_s z_v]
\end{gather}
Note that Equation \ref{eq20} still applies in this case.
Substitute Equations \ref{eq21}, \ref{eq22}, and \ref{eq23} into Equation \ref{eq20}, and solve for $\lambda_\ell$ in respect of variables $\lambda_s$, $x_b$, $y_b$, and $z_b$.
Then, by substituting the resulting expression for $\lambda_\ell$ into Equations \ref{eq21}, \ref{eq22}, and \ref{eq23}, we obtain an algebraic expression (of degree one) for $x_{c'}$, $y_{c'}$, and $z_{c'}$, respectively, in terms of variables $\lambda_s$, $x_b$, $y_b$, and $z_b$.

At last, by substituting the algebraic expressions for $x_{c'}$, $y_{c'}$, and $z_{c'}$ into Equation \eqref{eq17},  $f_\tau$ can be expressed algebraically as a function of $\lambda_s$, $x_b$, $y_b$, and $z_b$, where $b \in C$ and ${x_b}^2 + {y_b}^2 + {z_b}^2 = r^2$.

\paragraph{Finishing up}
We construct two lower envelopes $V_1$ and $V_2$ of the piecewise algebraic functions $f_\tau$ for all given triangles $ \tau \in P$, such that $V_1$ is the lower envelope of $f_\tau$ for $z_b \geq 0$, and $V_2$ is for $z_b < 0$.
As a result, $V_1$ and $V_2$ are functions of only three variables.

For constructing the lower envelope of the trivariate piecewise algebraic functions (of constant description complexity) just described, the best-known performance bounds are given by Koltun \cite{koltun04almost} and Agarwal et al. \cite{agarwal97computing}.
Specifically, according to Koltun \cite{koltun04almost}, we can compute the lower envelope deterministically in $O(n^{4+\epsilon})$ time and store it in a data structure of size $O(n^{4+\epsilon})$ and query time $O(\log n)$, for any $\epsilon > 0$.
On the other hand, Agarwal et al. \cite{agarwal97computing} showed that the lower envelope can be constructed in randomized expected time $O(n^{3+\epsilon})$ and stored in an $O(n^{3+\epsilon})$-size data structure with a query time of $O(\log^2 n)$.
Thus, we arrive at the result below.

\begin{thm}
\label{lem5}
For any constant $\epsilon > 0$, a set $P$ of $n$ triangles in $\mathbb{R}^3$ can be preprocessed in $O(n^{4+\epsilon})$ time into a data structure of size $O(n^{4+\epsilon})$ so that, for a query circular sector $\sigma$ with a fixed radius $r$ and an endpoint of its arc located at fixed point $t$, one can determine whether $\sigma$ intersects $P$ in $O(\log n)$ time.
\end{thm}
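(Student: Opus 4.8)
The plan is to lift the two-dimensional construction described above to $\mathbb{R}^3$ by treating the plane $\Pi$ that carries the query sector as an additional, variable object. First I would observe that a circular sector $\sigma$ of radius $r$ with one arc endpoint at $t$ is determined by three real parameters — the two degrees of freedom of its apex $b \in C$ together with the one-parameter rotation of $\Pi$ about the line $tb$ — plus a binary choice for the sense of rotation. Accordingly, for each triangle $\tau \in P$ I would introduce the trivariate function $\rho_\tau(I,\Omega,\theta)$ that, for every plane $\Pi = (I,\Omega)$ through $t$ and every angle $\theta$ locating $b$ on $C_\Pi$, records the largest half-turn of $bc$ about $b$ before it meets $\tau$; this is exactly the planar function $\rho_s(\theta)$, now with $s = \tau \cap \Pi$ varying with $(I,\Omega)$. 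As in the planar case $\rho_\tau$ is an inverse trigonometric function, so I would replace it by the algebraic surface $f_\tau = \sin(\rho_\tau/2)$, after which the whole emptiness test reduces to a point location in the lower envelope of the $f_\tau$'s: $\sigma$ is obstacle-free iff $\sin(\rho_\sigma/2)$ lies strictly below $\min_\tau f_\tau$ evaluated at the parameters of $\sigma$, where $\rho_\sigma$ is the opening angle of $\sigma$.

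The key steps, in order, would be the following. (i) Preprocessing. Partition each triangle and each of its edges by the spheres $C$ (radius $r$) and $C'$ (radius $\sqrt{2}\,r$) centered at $t$ into a constant number of pieces, so that each piece lies entirely inside $C$ or in the shell between $C$ and $C'$ (pieces outside $C'$ are irrelevant). For a piece inside $C$ the two-subinterval picture of the planar analysis recurs: in one sub-range $bc'$ meets the relative interior of $\tau$ — governed by the supporting plane of $\tau$, the plane through $t$ and $b$, and $|bc'| = r$, three algebraic equations in the coordinates of $c'$ — and in the adjacent sub-range $bc'$ meets an edge of $\tau$, governed by that edge's line, the line through $b$, and $|bc'| = r$. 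Eliminating $c'$ from these systems expresses $f_\tau$ as an algebraic function of constant degree in the coordinates $(x_b,y_b,z_b)$ of $b$ and one plane parameter, subject to $x_b^2 + y_b^2 + z_b^2 = r^2$; the shell case is entirely analogous. (ii) Lower envelopes. Using the sphere constraint to eliminate one coordinate, each $f_\tau$ becomes a trivariate piecewise-algebraic function of constant description complexity; I would build the lower envelope $V_1$ of these functions over $z_b \geq 0$ and $V_2$ over $z_b < 0$ (plus the symmetric pair for the opposite sense of rotation) and invoke Koltun's bound for $O(n^{4+\epsilon})$ construction time and space with $O(\log n)$ point location, or the bound of Agarwal et al.\ for the randomized $O(n^{3+\epsilon})$ variant. (iii) Query. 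Given $\sigma$, extract its apex $b_\sigma$ and the parameters of its plane, select $V_1$ or $V_2$ by the sign of $z_{b_\sigma}$ and the rotation sense, point-locate to retrieve the triangle piece inducing the envelope there, and in $O(1)$ time compare $\sin(\rho_\sigma/2)$ with that piece's $f_\tau$ value — equivalently, directly test whether the radius $b_\sigma c$ of $\sigma$ meets the retrieved triangle, exactly as in the implicit-envelope query described for the planar case.

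The main obstacle I expect is not the reduction itself but certifying that the surfaces $f_\tau$ have the properties the lower-envelope machinery needs: that each $f_\tau$ decomposes into only a constant number of algebraic patches of bounded degree — which requires a careful account of how $bc'$ transitions between touching the interior of $\tau$, an edge of $\tau$, and leaving the shell between $C$ and $C'$, together with the sub-ranges where $bt$ itself would stab $\tau$ and $\rho_\tau = \pi/2$ — and that any two such surfaces $f_{\tau_1}, f_{\tau_2}$ meet in a set of constant complexity, the natural three-dimensional analogue of the pseudo-segment property established in the planar case. A secondary difficulty is bookkeeping the degeneracies (triangles coplanar with $t$, edges parallel to a probe segment, $b_\sigma$ on $C \cap C'$, the seams $z_b = 0$), which I would dispose of by general-position arguments and by noting that a triangle contributes only $O(1)$ extra patches at each seam. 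Once constant description complexity is in hand, the claimed time, space, and query bounds follow immediately from the cited lower-envelope results, and correctness follows by carrying the monotone characterization of $\rho_\tau$ over patch by patch from the planar construction.
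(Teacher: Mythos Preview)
Your proposal is correct and follows essentially the same approach as the paper: define the trivariate function $\rho_\tau(I,\Omega,\theta)$, pass to the algebraic surrogate $f_\tau=\sin(\rho_\tau/2)$, split into the interior-of-$\tau$ and edge-of-$\tau$ sub-ranges (after partitioning by the spheres $C$ and $C'$), eliminate one coordinate via $x_b^2+y_b^2+z_b^2=r^2$ to obtain trivariate constant-degree patches, build the two lower envelopes for $z_b\geq 0$ and $z_b<0$, and invoke Koltun's (or Agarwal et al.'s) envelope bounds with implicit point-location queries. The only notable difference is that you are more explicit than the paper about the need to certify constant description complexity of the $f_\tau$ and the pairwise intersection sets, which the paper essentially asserts.
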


Given that $O(n^4)$ queries are to be processed in the worst case, the following result is obtained.

\begin{lem}
\label{lem6}
A feasible articulated probe trajectory, if one exists, can be determined in $O(n^{4+\epsilon})$ time using $O(n^{4+\epsilon})$ space, for any constant $\epsilon > 0$.
\end{lem}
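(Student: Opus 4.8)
The plan is to assemble the pieces developed so far into a single enumerate‑and‑test procedure. By Lemma~\ref{lem2}, if a feasible articulated trajectory exists then either (i) an extremal feasible unarticulated trajectory exists, which is already produced by Lemma~\ref{lem3} in $O(n^2)$ time and space, or (ii) an extremal feasible trajectory with an articulated final configuration exists that is isolated by at most four supports --- i.e., one of the combinatorial events catalogued in Table~\ref{tab1}. There are $O(n^4)$ such events, and, as described in Section~\ref{comp}, each is obtained from a fixed tuple of at most four obstacle features (edges, vertices, or surfaces) by a constant‑size sequence of geometric operations, or, in the five starred cases, by solving a constant‑size polynomial system. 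Hence I would first enumerate the full candidate set of extremal articulated trajectories in $O(n^4)$ time, storing it in $O(n^4)$ space, and discard spurious tuples by a constant‑time geometric sanity check (rotation angle at most $\pi/2$, $b$ the first crossing of $ab$ with $C$, $a$ on $S$, the bounding radii of $\sigma_{bct}$ correctly placed).

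Next I would preprocess the $n$ obstacle triangles into the two query structures needed for validation. For the condition that segment $ab$ of a candidate does not stab the interior of any obstacle, I build the ray‑shooting structure of de~Berg et al.\ \cite{deberg94efficient} and Pellegrini \cite{pellegrini93ray}, with $O(n^{4+\epsilon})$ preprocessing time and space and $O(\log n)$ query time. For the condition that the circular sector $\sigma_{bct}$ swept by $bc$ meets no obstacle, I invoke the query structure of Theorem~\ref{lem5}: every such sector has radius exactly $r$ and an endpoint of its arc at the fixed point $t$, so it is a legal query, answerable in $O(\log n)$ time after $O(n^{4+\epsilon})$ preprocessing. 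These two checks together certify the whole motion: the straight‑line insertion of $abc$ sweeps only a prefix of the union of $ab$ with the intermediate position of $bc$, and that intermediate position is one of the two bounding radii of $\sigma_{bct}$ while the final position $bt$ is the other; thus ``$ab$ stab‑free'' together with ``$\sigma_{bct}$ empty'' is precisely the feasibility condition of Section~\ref{comp_valid}.

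Then I would run the test loop: for each of the $O(n^4)$ surviving candidates, perform one ray‑shooting query on $ab$ and one circular‑sector‑emptiness query on $\sigma_{bct}$, each in $O(\log n)$ time, and report any candidate (together with the unarticulated solutions of Lemma~\ref{lem3}) that passes both. Correctness is immediate from Lemma~\ref{lem2}: if some feasible articulated trajectory exists, some listed candidate is feasible and passes both tests, and conversely any candidate passing both tests is genuinely feasible. The running time is $O(n^4)$ for enumeration, plus $O(n^{4+\epsilon})$ for building the two structures, plus $O(n^4\log n)$ for the $O(n^4)$ query pairs, which is $O(n^{4+\epsilon})$; the space is dominated by the two data structures, $O(n^{4+\epsilon})$. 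Replacing the heavy structures by a direct $O(n)$ scan over the obstacles per candidate gives the alternative $O(n^5)$‑time, $O(n)$‑space bound noted in the introduction.

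Since the genuinely hard ingredients --- the $O(n^4)$ bound on combinatorial events (Lemma~\ref{lem2} and Table~\ref{tab1}) and the three‑dimensional circular sector emptiness structure (Theorem~\ref{lem5}) --- are already in hand, the main remaining work is bookkeeping: checking that each case‑specific construction of Section~\ref{comp} runs in $O(1)$ time per feature tuple (in particular that the algebraic systems in the starred cases have constant size and constant degree), and confirming that the two emptiness tests jointly capture feasibility for every row of Table~\ref{tab1}, including the omitted cases where $\gamma_{ct}$ is tangent to an obstacle surface rather than an edge and the cases where a probe segment meets an obstacle endpoint. I expect this verification, rather than any new algorithmic idea, to be the only delicate point.
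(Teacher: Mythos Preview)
Your proposal is correct and follows essentially the same approach as the paper: enumerate the $O(n^4)$ extremal articulated candidates from Table~\ref{tab1}, build the $O(n^{4+\epsilon})$ ray-shooting structure and the $O(n^{4+\epsilon})$ circular-sector-emptiness structure of Theorem~\ref{lem5}, and validate each candidate with two $O(\log n)$ queries. The paper's own justification of Lemma~\ref{lem6} is in fact a one-line observation that $O(n^4)$ queries are processed against these structures, so your write-up is, if anything, more detailed than the original.
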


Since the space/time complexity of finding an extremal feasible articulated trajectory (Lemma \ref{lem6}) is dominant over that of the case of unarticulated trajectory (Lemma \ref{lem3}), the final result can be stated as follows.

\begin{thm}
One can determine if a feasible trajectory exists, and if so, report (at least) one such trajectory in $O(n^{4+\epsilon})$ time using $O(n^{4+\epsilon})$ space, for any constant $\epsilon > 0$.
\end{thm}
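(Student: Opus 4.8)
The plan is to assemble the theorem from the structural results of Section~\ref{extr} (Lemmas~\ref{lem1} and~\ref{lem2}) together with the algorithmic ingredients of Section~\ref{comp_valid} (Lemmas~\ref{lem3} and~\ref{lem6}). First I would invoke Lemmas~\ref{lem1} and~\ref{lem2} to reduce the search for \emph{any} feasible trajectory to the search for an \emph{extremal} feasible trajectory: if a feasible trajectory exists, then either a feasible unarticulated trajectory exists --- in which case an extremal one, isolated by one support vertex or two support edges, exists --- or an extremal feasible articulated trajectory isolated by at most four supports (edges, vertices, surfaces, or a combination) exists. Hence it suffices to enumerate all candidate extremal trajectories of both kinds and test each for feasibility.

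For the unarticulated case I would apply Lemma~\ref{lem3}: computing the visibility polyhedron from $t$ in $O(n^2)$ time and space yields every unbounded ray from $t$, each of which is itself a feasible unarticulated trajectory, so if one exists we report it. For the articulated case I would use the exhaustive case analysis summarized in Table~\ref{tab1}: each combinatorial type is parameterized by at most four obstacle features, so there are $O(n^4)$ candidate tuples, and from each tuple a single extremal articulated trajectory is produced in constant time by the geometric operations, or by solving the constant-size algebraic systems of Section~\ref{comp} in the starred cases. Each candidate is then validated by two emptiness tests: a ray-shooting query certifying that segment $ab$ stabs no obstacle triangle (de~Berg~et~al.\ and Pellegrini: $O(n^{4+\epsilon})$ preprocessing time and space, $O(\log n)$ per query), and a circular-sector emptiness query certifying that the swept sector $\sigma$ avoids every obstacle (Theorem~\ref{lem5}: $O(n^{4+\epsilon})$ preprocessing time and space, $O(\log n)$ per query). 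If some candidate passes both tests it is reported; this is exactly the content of Lemma~\ref{lem6}.

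Combining the two branches, the cost is dominated by the articulated one: $O(n^{4+\epsilon})$ to build the two query structures, plus $O(n^4)$ candidates each processed in $O(\log n)$ time, plus the $O(n^2)$ unarticulated computation, giving $O(n^{4+\epsilon})$ time and $O(n^{4+\epsilon})$ space overall. Correctness follows because the candidate set is, by Lemmas~\ref{lem1}--\ref{lem2}, guaranteed to contain a feasible trajectory whenever one exists, and because the two emptiness tests are jointly necessary and sufficient for feasibility of a candidate: a probe trajectory is feasible precisely when $ab$ meets no obstacle interior during insertion and $\sigma$ meets no obstacle during rotation.

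The main obstacle I anticipate lies not in this final bookkeeping but in ensuring that the Table~\ref{tab1} enumeration is genuinely complete --- that every extremal articulated configuration permitted by Lemma~\ref{lem2} falls into one of the listed cases, including the degenerate situations (an obstacle vertex incident to the interior of $\sigma_{bct}$, tangency of $\gamma_{ct}$ to an obstacle surface as opposed to an edge, and obstacle edges straddling sphere $C$) that the table handles by reduction or omits by similarity. A secondary point is checking that the constant-size algebraic systems in the starred cases are well posed, i.e.\ have finitely many solutions, and that the $O(n^{4+\epsilon})$-space query structures can indeed be instantiated on the particular families of triangles and segments arising here; since both of these are established earlier, here they need only be cited and combined.
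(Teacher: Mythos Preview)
Your proposal is correct and follows essentially the same approach as the paper: the theorem is obtained by combining Lemmas~\ref{lem1}--\ref{lem2} (reduction to extremal trajectories) with Lemma~\ref{lem3} (unarticulated case) and Lemma~\ref{lem6} (articulated case), and observing that the $O(n^{4+\epsilon})$ cost of the articulated branch dominates. The paper's own justification is in fact a single sentence to this effect, so your write-up is simply a more explicit unpacking of the same argument.
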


As an alternative, an $O(n^5)$-time algorithm with linear space usage is achievable by performing a simple $O(n)$-check on each of the $O(n^4)$ extremal trajectories.
The proposed enumeration algorithm is easy to implement and could be quite fast in practice, since the enumeration stops once a feasible solution is found.
In addition, given that an extremal feasible probe trajectory amid $n$ line segment obstacles in the plane can be obtained in $O(n^2)$ time \cite{daescu20char}, one could use a sampling-based method to find a feasible probe trajectory among $n$ triangular obstacles in 3D space.
Specifically, one could generate a finite set of planes passing through fixed point $t$ either randomly or using an appropriate discretization scheme (i.e., with respect to parameters $I$ and $\Omega$), and then employ the existing data structure from \cite{teo20traj} to compute the set of feasible trajectories, if any, in each plane.

\section{Conclusion}
\label{conc}
We have presented efficient data structures and algorithms for solving a trajectory planning problem involving a simple articulated probe in 3D space.
In particular, we have shown that a feasible probe trajectory, among $n$ triangular obstacles, can be found in $O(n^{4+\epsilon})$ time, for any constant $\epsilon > 0$.
In the process, we have solved a special case of the circular sector emptiness query problem in 3D and simplified the corresponding data structure in 2D.
We leave open the following questions:
1) Since our approach is enumerative, can we speed up the process of finding one feasible solution?
2) Is it possible to extend the current algorithm to finding feasible probe trajectories of a given (or maximum) clearance?

\section*{Acknowledgment}
The authors would like to thank Pankaj K. Agarwal for helpful discussions and comments.

\bibliographystyle{elsarticle-harv} 
\bibliography{refs}

\end{document}